\begin{document}

\title{Maximum Centre-Disjoint Mergeable Disks}

\author{Ali Gholami Rudi\thanks{Address for correspondence:  Department of Electrical and Computer Engineering,
                        Babol Noshirvani University of Technology, Babol, Mazandaran, Iran. \newline \newline
                    \vspace*{-6mm}{\scriptsize{Received January 2023; \ accepted July 2024.}}}
\\
Department of Electrical and Computer Engineering \\
Babol Noshirvani University of Technology \\
Babol, Mazandaran, Iran\\
gholamirudi@nit.ac.ir
}

\maketitle

\runninghead{A.G. Rudi}{Maximum Centre-Disjoint Mergeable Disks}

\begin{abstract}
Given a set of disks in the plane, the goal of the problem
studied in this paper is to choose a subset of these disks
such that none of its members contains the centre of any other.
Each disk not in this subset must be merged with one of
its nearby disks that is, increasing the latter's radius.
This problem has applications in labelling rotating maps and
in visualizing the distribution of entities in static maps.
We prove that this problem is NP-hard.
We also present an ILP formulation for this problem,
and a polynomial-time algorithm for the special case
in which the centres of all disks are on a line.
\end{abstract}

\begin{keywords}
$\!$Merging disks, map labelling, rotating maps,NP-hardness, geometric independent~set
\end{keywords}

\section{Introduction}
A motivating example for the problem studied in this paper is
the following about drawing text labels on a digital map that
can be rotated: suppose there are a number of points on the map
that represent map features.  To each of these \emph{feature points}
a text label is assigned that describes the feature,
 like the name of a junction. When the map is rotated by the user, these labels must remain
horizontal \linebreak  for the sake of readability, and therefore, they
are rotated in the$\,$ reverse direction around their feature
 \eject \noindent
   point (see the first two parts of Figure~\ref{rota}).
Labels are difficult to read if they overlap, and therefore,
only a non-overlapping subset of the labels are drawn on the map.
If a label cannot be drawn because it overlaps with other
labels, the text of its label must be appended to a nearby label that
is drawn.
The goal is to draw the maximum number of labels on the map such
that none of them overlap when rotating the map.
This is demonstrated in Figure~\ref{rota}.
Part~(a) shows four feature points and their labels.
Part~(b) shows the map when it is rotated 45 degrees counterclockwise;
instead of rotating the map, the labels are equivalently rotated 45
degrees clockwise.  Obviously the two labels on the left side of the
map overlap, making the map of Part~(a) infeasible.  Part~(c) shows what
happens when these labels are merged.
The remaining three labels never overlap when the map is rotated.

\begin{figure}[h]
\vspace*{2mm}
	\centering
	\includegraphics[width=\columnwidth]{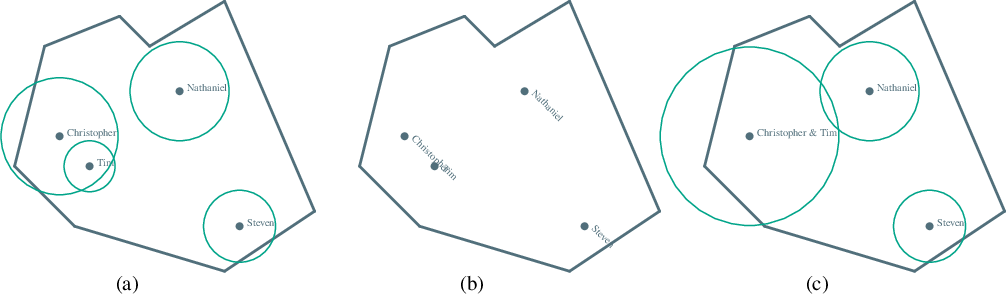}
	\caption{An example rotating map with 4 labels.
	(a) The initial configuration in which two of the labels overlap during rotation (the circles show the area covered by the labels during rotation).
	(b) After rotating the map 45 degrees counterclockwise.
	(c) Two of the labels are merged so that none of the label overlap during rotation.}
	\label{rota}\vspace*{-2mm}
\end{figure}

Placing as many labels as possible on a map (known as map labelling) is
a classical optimization problem in cartography and graph drawing \cite{formann91}.
For static maps, i.e.\ maps whose contents does not change, the problem of placing labels on a map can be stated
as an instance of geometric independent set problem (also known as
packing for fixed geometric objects): given a set of geometric
objects, the goal is to find its largest non-intersecting subset.
In the weighted version, each object also has a weight and the
goal is to find a non-intersecting subset of the maximum possible weight.

A geometric intersection graph, with a vertex for each object and
an edge between intersecting objects, converts this geometric problem to
the classical maximum independent set for graphs, which is
NP-hard and difficult to approximate even within a factor of $n^{1-\epsilon}$,
where $n$ is the number of vertices and $\epsilon$ is any non-zero positive constant \cite{hastad96}.
Although the geometric version remains NP-hard even for unit disks \cite{fowler81},
it is easier to approximate, and several polynomial-time approximation
schemes (PTAS) have been presented for this problem \cite{hochbaum85,agarwal98,erlebach05,chan03,chan12}.

Dynamic maps allow zooming, panning, or rotation, and labelling
in such maps seems more challenging.
Most work on labelling dynamic maps consider zooming and panning
operations \cite{been06}.
Gemsa et al.~\cite{gemsa11} were the first to formally study labelling
rotating maps.  With the goal of maximising the total duration in
which labels are visible without intersecting other labels,
they proved the problem to be NP-hard and presented a $1/4$-approximation
algorithm and a PTAS, with the presence of \mbox{restrictions} on the distribution
of labels on the map.  Heuristic algorithms and Integer Linear Programming
(ILP) formulations have also been presented for this problem~\cite{gemsa16,cano17}.
Note that in these problems, invisible labels do not get merged with
visible labels.
Yokosuka and Imai~\cite{yokosuka13} examined a variant of this problem,
in which all of the labels are always present in the solution
and the goal is to maximise their size.

A related problem is crushing disks \cite{funke16},
in which a set of prioritized disks are given as input,
whose radii grow over time, as map labels do when zooming in.  When
two disks touch, the one with the lower priority disappears.  The
radii of the disks grow linearly, and when a disk disappears, the
radius of the other disk does not change.  The goal is to find the
order in which disks disappear and the process finishes when only one
disk remains.

In this paper, we investigate a problem similar to geometric
independent set for a set of disks, except that
i) the disks in the output must be centre-disjoint (none
of them can contain the centre of another) but they
may overlap,
ii) each disk that does not appear in the output must be
merged with a disk, containing its centre, that does.
When a disk is merged with another, the radius of the latter
is increased by the radius of the former.
Also to preserve the locality of the merges,
a disk $A$ can be merged with another disk $B$, only if
all disks closer to $B$ than $A$ (considering the distance
between disk centres) are also merged with $B$,
and after merging these closer disks, $B$ must contain
the centre of $A$.
This problem is formally defined in Section~\ref{spre}.
We prove this problem to be NP-hard via a reduction from
Planar Monotone 3-SAT \cite{berg10}.
Note that even without this restriction on merge order,
the problem remains NP-hard; we have presented a PTAS in an
earlier paper \cite{rudi21} for the case in which this
restriction on merge order is not assumed.

To observe how the introductory example at the beginning of this section
reduces to this problem, consider the disks in Figure~\ref{rota} (a).
The disk centred at each feature point shows the region covered by its
label during rotation.  Only if a disk contains the centre of another,
their corresponding labels intersect at some point during rotation.
As another application of this problem, centre-disjoint disks can
show the distribution of facilities in an area.  For instance,
Figure~\ref{rotx} shows the distribution of schools in Munich.
It was obtained by placing a disk of radius 50 meters
on each school (the coordinates of schools were obtained from
OpenStreetMap data).  Then, an ILP (Section~\ref{silp}) was
used to obtain the maximum number of centre-disjoint disks in our problem

\begin{figure}
	\centering
	\includegraphics[height=8cm]{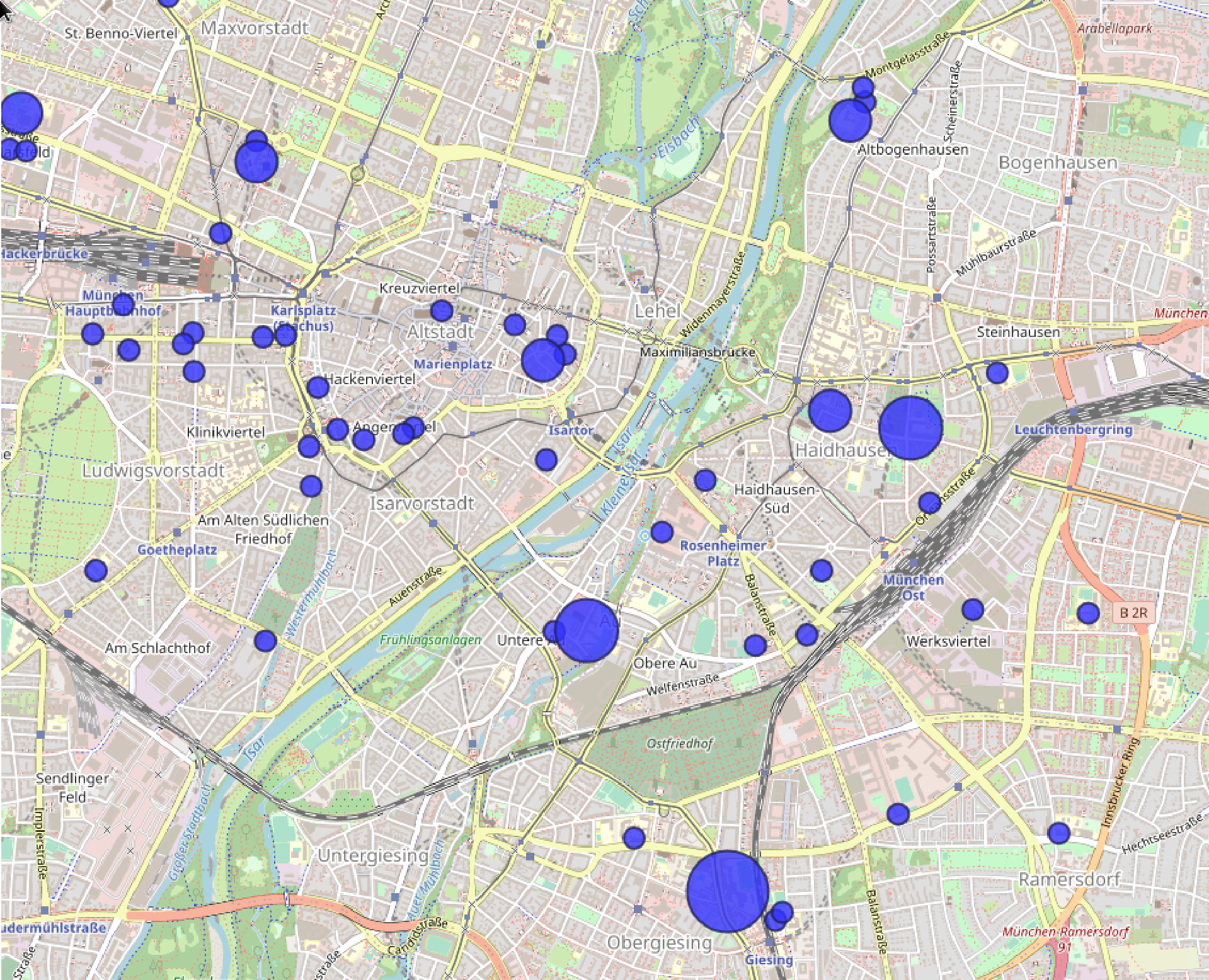}
	\caption{The distribution of schools in Munich; disks
	corresponding to neighbouring schools were merged
	using the ILP of Section~\ref{silp}
	to obtain larger, centre-disjoint disks.}
	\label{rotx}
\end{figure}

Note that the centre-disjointness property of disks, which we assume
for the output of our problem, is also used in the definition of
transmission graphs of a set of disks, in which a vertex is assigned
to each disk and a directed edge from a disk to another shows that the
former contains the centre of the latter \cite{kaplan19}.
These graphs have been studied, for instance, for computing their spanners
(a subgraph to estimate the distance between pairs of vertices) \cite{kaplan18},
counting their number of triangles and computing their girth \cite{kaplan19},
and their recognition \cite{klost18}.
Transmission graphs show the static relation between input disks
and do not directly help us in our problem,
in which the goal is to find a set of radius-changing disk merges,
which makes them centre-disjoint.

To obtain an efficient algorithm for the problem, we also examine a
more restricted version, in which the centres of input disks
are on a line (Section~\ref{salg}).  For this version, we present a
polynomial-time algorithm that incrementally obtains a set of
centre-disjoint disks with the maximum size.
Note that the assumption of collinear inputs have been applied
to many other challenging problems, such as~\cite{biniaz18}.

This paper is organised as follows.
In Section~\ref{spre} we introduce the notation used in this paper
and formally state the problem.  Then, in Section~\ref{shard} we
show that the problem studied in this paper is NP-hard.
In Section~\ref{salg}, we present algorithms for solving this
problem: we present an ILP formulation for solving the general
case of the problem, and a polynomial-time
dynamic programming algorithm for the case in which all disk
centres are on a line.
Finally, in Section~\ref{scon} we conclude this paper.

\section{Notation and preliminary results}
\label{spre}
Let $D = \{ d_1, d_2, \dots , d_n \}$ be a set of $n$ disks.
The radius of $d_i$ is denoted as $r_i$,
and its centre is denoted as $p_i$.

\begin{definition}
\label{dassign}
A function $\phi$ from $D$ to itself is an \emph{assignment},
if $\phi(\phi(d_i))$ is $\phi(d_i)$ for every $d_i$ in $D$.
According to an assignment $\phi$, the disks in $D$
can be either \emph{selected} or \emph{merged}:
if $\phi(d_i)$ is $d_i$, the disk $d_i$ is selected, and otherwise,
it is merged.
The cardinality of an assignment, denoted as $\left|\phi\right|$,
is the number of selected disks in $\phi$.
\end{definition}

The relation defined by assignments (Definition~\ref{dassign})
describes disk merges in our problem.
For any disk $d_i$, if we have $\phi(d_i) = d_j$ and $i \neq j$,
it implies that $d_i$ is merged with $d_j$.
On the other hand, the relation $\phi(d_i) = d_i$ implies
that $d_i$ is a selected disk and is not merged with any other disk.
Since a disk can be merged with selected disks only,
for any disk $d_i$, we have $\phi(\phi(d_i)) = \phi(d_i)$.

\begin{definition}
\label{ddelta}
The aggregate radius of a selected disk $d_i$ with
respect to an assignment $\phi$, denoted with some misuse of notation
as $r_i(\phi)$, is the sum of its radius and that of every disk
merged with it, or equivalently,
$$
r_i(\phi) = \sum_{j:~\phi(d_j) = d_i} r_j.
$$
Let $\delta_i$ be the sequence of disks in $D \setminus \{d_i\}$,
ordered increasingly by the distance of their centres from the
centre of $d_i$, and let $\delta_i(j)$ denote its $j$-th disk.
The $j$-th aggregate radius of $d_i$, denoted as $r_i(j)$, is defined as
its aggregate radius if $\{\delta_i(1), \delta_i(2), \dots , \delta_i(j)\}$ are
merged with $d_i$.
\end{definition}

We now define proper assignments (Definition~\ref{dproper}).
In the rest of this paper, the distance between two disks
is defined as the Euclidean distance between their centres.

\begin{definition}
\label{dproper}
An assignment $\phi$ is \emph{proper} if it meets the following conditions.
\begin{enumerate}
\itemsep=0.9pt
\item
The disk $\delta_i(j)$ can be merged with $d_i$, only if
$\delta_i(k)$, for every $k$ where $1 \le k < j$, are also merged with $d_i$.
In other words, all disks closer to $d_i$ than $\delta_i(j)$
are also merged with $d_i$.
\item
The disk $\delta_i(j)$ can be merged with $d_i$, only if
the distance between the centre of $d_i$ and $\delta_i(j)$
is less than $r_i(j-1)$.  In other words,
after merging $\delta_i(k)$ for $1 \le k < j$, $d_i$ must contain the
centre of $\delta_i(j)$.
\item
Selected disks must be centre-disjoint with respect to their aggregate radii;
i.e.~none of them can contain the centre of any other selected disk.
More precisely, for indices $i$ and $j$ such that $i \neq j$, $\phi(d_i) = d_i$,
and $\phi(d_j) = d_j$, we have $\left| p_i p_j \right| \geq \max (r_i (\phi), r_j(\phi))$.
\end{enumerate}
\end{definition}

Note the first two items in Definition~\ref{dproper} ensure the locality
of the merges, which is especially important in the labelling application
mentioned in the Introduction.

\begin{definition}
\label{dmcmd}
Given a set of disks, in the Maximum Centre-Disjoint Mergeable Disks Problem (MCMD),
the goal is to find a proper assignment of the maximum possible cardinality.
\end{definition}

\begin{figure}[!h]
	\centering
	\includegraphics[width=5.5cm]{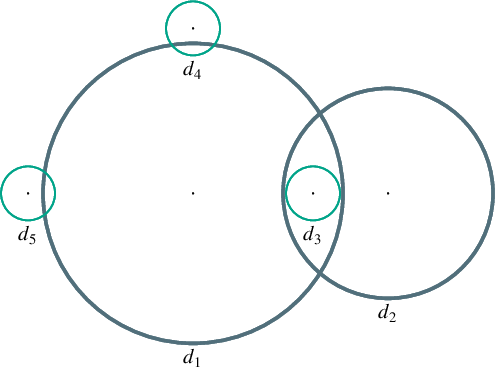}
	\caption{An example set of disks with two proper assignments:
	Either all disks are merged with $d_1$, which gives a proper assignment
	of cardinality 1, or just $d_3$ is merged with $d_2$, which gives a
	proper assignment of cardinality 4.  The latter is a solution to
	MCMD, because it has the maximum cardinality.
	}
	\label{rot1}
\end{figure}

Figure~\ref{rot1} shows a configuration of five disks with more than
one proper assignment.  Disk $d_3$ can be merged with $d_1$, after which,
$d_1$ would contain the centre of $d_4$ and $d_5$, both of which
then have to be merged with $d_1$.  These merges result in $d_1$ containing
the centre of $d_2$, which would also be merged.  Therefore,
in this assignment $\phi_1$, we have $\phi_1(d_i) = d_1$,
for $1 \le i \le 5$, and its cardinality is one.
Alternatively, in assignment $\phi_2$ we can merge $d_3$ with
$d_2$, as the latter contains the centre of the former.
The remaining disks are centre-disjoint.  Therefore,
we have $\phi_2(d_1) = d_1$, $\phi_2(d_2) = d_2$,
$\phi_2(d_3) = d_2$, $\phi_2(d_4) = d_4$, $\phi_2(d_5) = d_5$,
and its cardinality is four.
Assignment $\phi_2$ maximises the number of selected disks,
and is a solution to MCMD for the configuration of disks in Figure~\ref{rot1}.

\medskip
Not every set of disks has a proper assignment.
Figure~\ref{rot2} shows an example.
Disk $d_3$ can be merged with either $d_1$ or $d_2$.
If $d_3$ is merged with $d_1$, $d_5$ cannot be merged with $d_2$,
because of the second condition of proper assignments: $d_5$
can be merged with $d_2$, only if all closer disks to $d_2$ are
merged with it (but $d_3$ which is closer to $d_2$ than $d_5$
is not).  Therefore, $d_5$ can be neither merged, nor selected (because
its centre is contained in $d_2$).
Similarly, if $d_3$ is merged with $d_2$, $d_4$ can neither be
merged nor selected.  Thus, there exists no proper assignment
for these set of disks.  In Section~\ref{srelax} we introduce a
variant of MCMD by relaxing the second condition of Definition~\ref{dproper},
in which every instance has a solution.

\begin{figure}[h]
\vspace*{2mm}
	\centering
	\includegraphics[width=6cm]{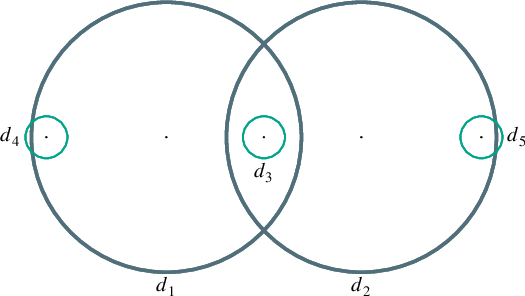}
	\caption{An example set of disks with no proper assignment.
	Either $d_3$ and $d_4$ are merged with $d_1$, after which
	$d_5$ cannot be merged with $d_2$ (but must be),
	or $d_3$ and $d_5$ are merged with $d_2$, after which $d_4$ cannot
	be merged with $d_1$ (but must be).
	}\vspace*{-2mm}
	\label{rot2}
\end{figure}

\section{Hardness of maximum centre-disjoint mergeable disks}
\label{shard}
Instead of proving that the decision version of MCMD (Definition~\ref{ddmcmd})
is NP-complete, we show that even deciding whether a set of disks has a proper
assignment (Definition~\ref{dpropmcmd}) is NP-complete (clearly
the latter implies the former).
To do so, we perform a reduction from the NP-complete
\textsc{Planar Monotone 3-SAT} (Definition~\ref{d3sat}) \cite{lichtenstein82}
to \textsc{Proper MCMD} (Definition~\ref{dpropmcmd}).

\subsection{Hardness of MCMD}
\begin{definition}
\label{ddmcmd}
In the $k$-\textsc{MCMD} problem, we are given a set of disks and we
have to decide if there exists a proper assignment of cardinality
at least $k$ or not.
\end{definition}

\begin{definition}
\label{dpropmcmd}
In the \textsc{Proper MCMD} problem, we are given a set of disks and we
have to decide if there exists a proper assignment.
\end{definition}

\begin{definition}
\label{d3sat}
\textsc{Monotone 3-SAT} is a variant of 3-SAT, in which all variables
of each clause are either positive or negative.
An instance of \textsc{Monotone 3-SAT} is called \textsc{Planar},
if it can be modeled as a planar bipartite graph with parts $V$ corresponding
to variables and $C$ corresponding to clauses; each vertex in $C$ is incident
to at most three variables, which correspond to the variables that appear in
the clause.
\end{definition}

\begin{figure}[!h]
\vspace*{2mm}
	\centering
	\includegraphics[width=13cm]{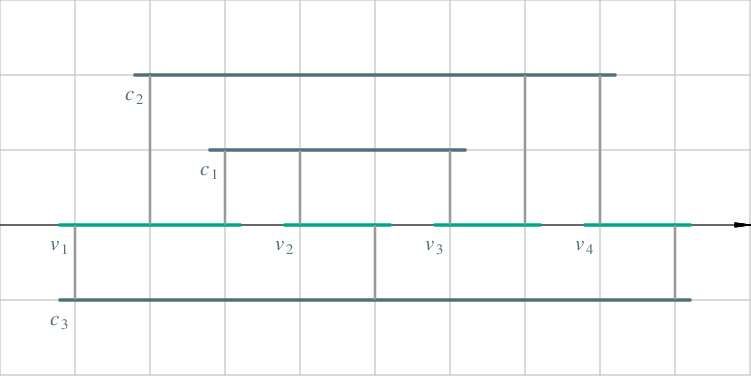}
	\caption{A monotone rectilinear representation of a \textsc{Planar Monotone 3-SAT} instance with three clauses and four variables.
	Horizontal segments on the $x$-axis denote the variables, and horizontal segments above
	and below the $x$-axis denote positive and negative clauses, respectively:
	$c_1 = v_1 \lor v_2 \lor v_3$, $c_2 = v_1 \lor v_3 \lor v_4$,
	$c_3 = \neg v_1 \lor \neg v_2 \lor \neg v_4$.
	}
	\label{rotd}
\end{figure}

Deciding if an instance of \textsc{Planar Monotone 3-SAT} is
satisfiable is NP-complete~\cite{berg10}.
It can be proved that every instance of \textsc{Planar Monotone 3-SAT} has
a \emph{monotone rectilinear representation} (Definition~\ref{drect};
an example is shown in Figure~\ref{rotd}), and
also, if for every instance of \textsc{Planar Monotone 3-SAT} its
monotone rectilinear representation is also given, the problem
remains NP-Complete~\cite{berg10}.

\begin{definition}
\label{drect}
A \emph{monotone rectilinear representation} of an instance of
\textsc{Planar Monotone 3-SAT} is a drawing of the instance
with the following properties:
i) Variable are drawn as disjoint horizontal segments on the $x$-axis,
ii) positive clauses are drawn as horizontal segments above the $x$-axis,
iii) negative clauses are drawn as horizontal segments below the $x$-axis,
iv) an edge is drawn as a vertical segment between a clause segment and
the segments corresponding to its variable, and
v) the drawing is crossing-free.
\end{definition}

Figure~\ref{rotd} shows a monotone rectilinear representation of an instance of
\textsc{Planar Monotone 3-SAT} with three clauses.  Lemma~\ref{lgrid} shows
how to map a \textsc{Planar Monotone 3-SAT} to a two-dimensional integer grid
(its proof is presented at the end of this section).

\begin{lemma}
\label{lgrid}
For an instance of \textsc{Planar Monotone 3-SAT} with $v$ variables
and $c$ clauses, there exists a monotone rectilinear representation on
a two-dimensional integer grid with $c + 1$ rows and $3c + v$ columns,
such that horizontal segments, which
represent variables and clauses, appear on horizontal grid lines, and
vertical segments appear on vertical grid lines.
\end{lemma}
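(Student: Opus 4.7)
The plan is to start from any monotone rectilinear representation of the instance---whose existence is already established in~\cite{berg10}---and to snap it onto an integer grid while preserving crossing-freeness, the sign condition on clause $y$-coordinates, and the axis-aligned orientation of all segments. The argument decomposes into two essentially independent normalisations, one for rows ($y$-coordinates) and one for columns ($x$-coordinates). I would begin with a generic perturbation that places all horizontal clause segments on pairwise distinct $y$-coordinates and all vertical edges on pairwise distinct $x$-coordinates without altering the combinatorial embedding.

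For rows, let $c_+$ and $c_-$ be the numbers of positive and negative clauses, with $c_+ + c_- = c$. I would sort the positive clauses by their current $y$-coordinate and reassign them, in that order, to the integer heights $1, 2, \ldots, c_+$; analogously, the negative clauses go to the heights $-1, -2, \ldots, -c_-$, and every variable remains on $y = 0$. Because this reassignment is order-preserving above and below the $x$-axis and every clause retains its sign, stretching or shrinking the vertical edges to meet the new endpoints introduces no crossings. This uses exactly $c_+ + c_- + 1 = c + 1$ distinct $y$-coordinates, one per row.

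For columns, the key count is that the instance has at most $3c$ edges, each drawn as a vertical segment. I would list the distinct $x$-coordinates of these edges together with the endpoints of the variable segments in increasing order and relabel them by consecutive integers, preserving order. The edges alone contribute at most $3c$ columns. The main obstacle I anticipate is the extra accounting for variable segments: each such segment must be nondegenerate, must span the $x$-coordinates of all its incident edges, and must remain disjoint from the other variable segments. I would handle this by letting the leftmost incident edge of each variable serve as the left endpoint of its segment and allocating at most one fresh column immediately to the right of its rightmost incident edge for the right endpoint, relying on planarity of the input representation (and the standard assumption that every variable appears in some clause) to ensure this fresh slot can be inserted without conflicting with another variable's range. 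This yields at most $v$ additional columns, for a total of $3c + v$, as required.
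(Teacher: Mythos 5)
Your proposal is correct and follows essentially the same route as the paper: an order-preserving reassignment of the horizontal lines to integer rows ($c_+$ above, $c_-$ below, variables on $y=0$) and of the vertical lines to consecutive integer columns, with crossing-freeness preserved because the relative order of lines never changes. The only difference is bookkeeping for the $v$ extra columns---the paper reserves them for variables that appear in no clause, whereas you spend them on making each variable segment nondegenerate past its rightmost incident edge---and both accountings stay within the stated $3c+v$ bound.
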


To prove that \textsc{Proper MCMD} is NP-complete, we perform a reduction from
\textsc{Planar Monotone 3-SAT} to \textsc{Proper MCMD}.
To do so, we create an instance of \textsc{Proper MCMD} from the monotone
rectilinear representation of any instance of \textsc{Planar Monotone 3-SAT}
in Theorem~\ref{tnp}.
In our construction, we use two types of disks:
\begin{itemize}
\item Disks, which by our construction, are always
selected (their centres can never be inside any other disk).
We call them \emph{s-disks} for brevity.
\item
Disks of very small radius, which are contained in at least one s-disk, and thus,
are surely merged in our construction.  We call these disks \emph{m-disks}.
We assume that the radius of m-disks is so small compared to the radius
of s-disks that after merging any number of m-disks with an s-disk,
the centre of no new disk would enter the s-disk in our configuration.
In the instance of \textsc{Proper MCMD} that we construct,
each s-disk contains at least one m-disk.
\end{itemize}

We create a configuration of disks using \emph{gadgets}, each of
which consists of some m-disks and s-disks.
The m-disks of a gadget are either internal (internal m-disks)
or can be shared with other gadgets (shared m-disks).
Parts (a) and (b) of Figure~\ref{rotc} show two gadgets (from each gadget,
only an s-disk and an m-disk is shown).
In Figure~\ref{rotc} (c) these two gadgets are joined at m-disk $m$.
In a proper assignment, $m$ is merged either with
an s-disk of $A$ or with an s-disk of $B$.
With respect to gadget $A$, if $m$ is merged with $A$ in a proper
assignment, we say that it is \emph{merged in}, and otherwise,
\emph{merged out} with respect to $A$.

\begin{figure}[!ht]
	\centering
	\includegraphics[width=8cm]{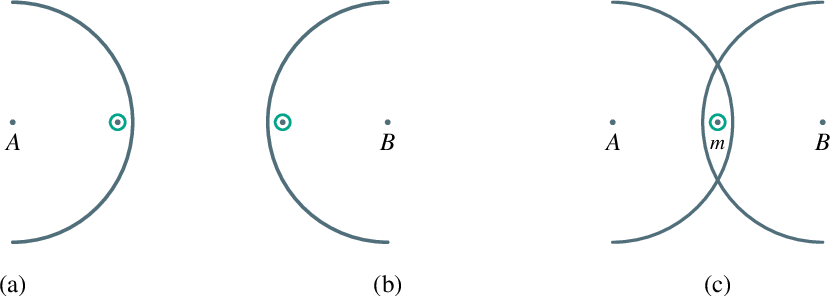}
	\caption{Two gadgets (Parts (a) and (b)), joined at one of their m-disks (Part (c)).}
	\label{rotc}
\end{figure}

We use the following gadgets in our construction.
The gadgets and the distance between shared m-disks
of each of them are shown in Figure~\ref{fgadgets};
s-disks (denoted as $s_i$) are large disks and
m-disks (denoted as $m_i$) are small disks.
More details about these gadgets are provided now:

\begin{figure}[ht]
 \vspace*{2mm}
	\includegraphics[width=0.97\columnwidth]{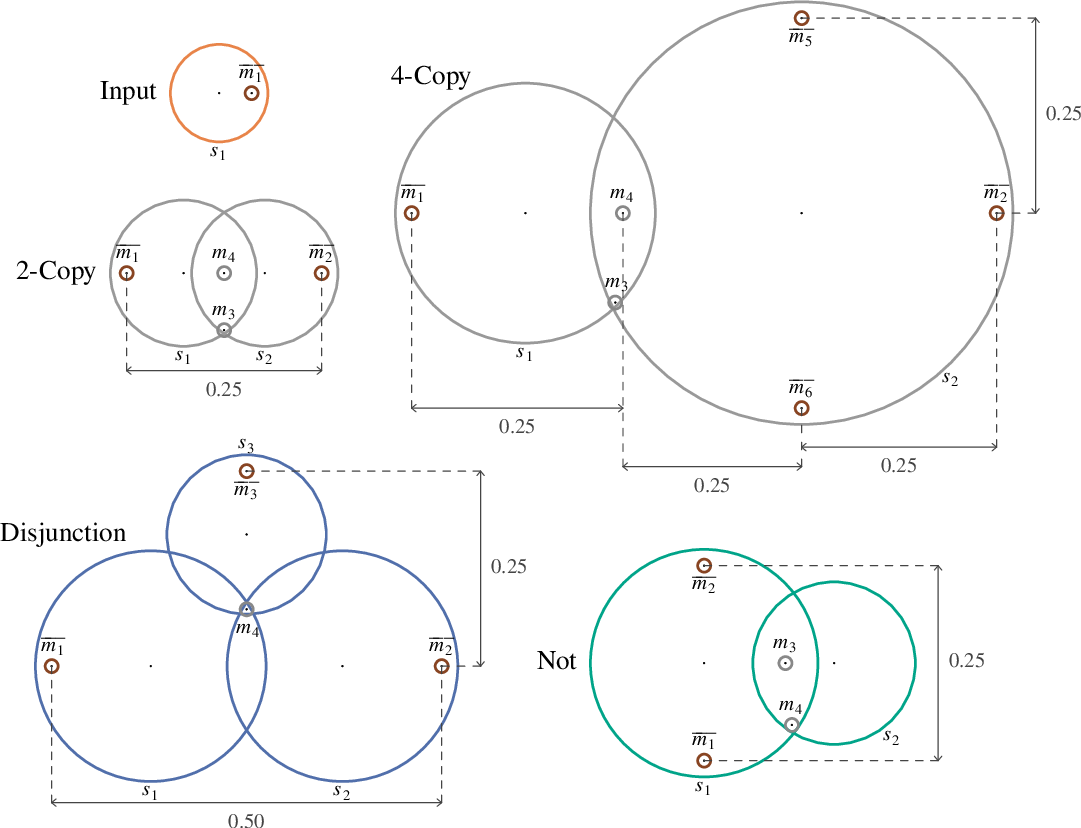}
	\centering
	\caption{Gadgets used in the proof of Theorem~\ref{tnp};
	$s_i$ and $m_i$ for different values of index $i$
	denote s-disks and m-disks, respectively.
	Shared m-disks are indicated with overlines (like $m_1$ in Input).
	The sizes of the gadgets are specified such that the gadgets
	fit together in the proof of Theorem~\ref{tnp}.}
	\label{fgadgets}
\end{figure}

\begin{itemize}
\item Input: This gadget has only one shared m-disk, which can be either merged in or merged out.
\item Copy: We use two gadgets for copy in our construction: one with
two m-disks and one with four (both of them are demonstrated in Figure~\ref{fgadgets}),
which we reference as 2-Copy and 4-Copy, respectively.
The logic behind both of them is similar and is explained thus.
If $m_1$ is merged in, $m_2$ (also $m_5$ and $m_6$ if present) is merged out,
and if $m_1$ is merged out, $m_2$ (also $m_5$ and $m_6$) is merged in.
To see why, note that $m_3$ can be merged either with $s_1$ or with $s_2$.
If $m_3$ is merged with $s_1$, both $m_1$ and $m_4$ must also
be merged with $s_1$, because $m_3$ is farther than both to $s_1$.
Since $m_4$ is merged with $s_1$, $m_2$ (also $m_5$ and $m_6$)
cannot be merged with $s_2$ and therefore they must be merged out.
Similarly, if $m_3$ is merged with $s_2$, $m_2$ (also $m_5$ and $m_6$)
must be merged with $s_2$ as well, and $m_1$ must be merged out.
\item Disjunction: One or more of its shared m-disks are merged in.
Clearly, $m_4$ must be merged with $s_1$, $s_2$, or $s_3$.
If it is merged with $s_i$ ($i \in \{1, 2, 3\}$), $m_i$
must also be merged with $s_i$, and $m_j$ ($j \ne i$) may or
may not be merged in.
\item Not: Either both $m_1$ and $m_2$ are merged in or both of them
are merged out.
This is because $m_4$ can be merged either with $s_1$ or $s_2$.
If it is merged with $s_1$, m-disks $m_1$, $m_2$, and $m_3$ must
also be merged with $s_1$, because $m_4$ is farther than all of them.
Otherwise, if $m_4$ is merged with $s_2$, m-disk $m_3$ must also
be merged with $s_2$ and therefore, none of $m_1$ and $m_2$ can be
merged with $s_1$, because $m_3$ (which is closer than both)
is not merged with $s_1$.  Thus, $m_1$ and $m_2$ must merge out.
\end{itemize}
In our construction of the proof of Theorem~\ref{tnp}, some
of the shared m-disks of 4-Copy gadget are unused and are
not shared with any other gadget; for such instances of 4-Copy,
their unused shared m-disks must be removed.

\begin{theorem}
\label{tnp}
\textsc{Proper MCMD} is NP-complete.
\end{theorem}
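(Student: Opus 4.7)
The plan is a two-part argument: membership in NP and a reduction from \textsc{Planar Monotone 3-SAT} with a prescribed monotone rectilinear representation (which is legitimate by Lemma~\ref{lgrid} and~\cite{berg10}). For membership, given a candidate assignment $\phi$ we check in polynomial time that $\phi \circ \phi = \phi$, compute each $r_i(\phi)$ by summing radii over preimages, verify conditions~1 and~2 of Definition~\ref{dproper} against the sorted sequences $\delta_i$, and verify condition~3 by comparing pairwise centre distances against aggregate radii. This is clearly polynomial.

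For the hardness part, given an instance $I$ of \textsc{Planar Monotone 3-SAT} together with its monotone rectilinear representation on a $(c+1) \times (3c+v)$ grid, I build an instance $\mathcal{D}$ of \textsc{Proper MCMD} as follows. For each variable, I place an Input gadget on its grid segment. Along each horizontal run of a wire I chain together Copy gadgets (using the 2-mdisk or the 6-mdisk version, depending on how many outgoing branches that row needs), and at each bend from a horizontal to a vertical grid line I use a branching Copy to turn the signal. For each positive clause I run three such wires up to a Disjunction gadget placed on the clause's horizontal line; for each negative clause I symmetrically run wires downward, but insert a Not gadget on each wire before it enters the Disjunction, so that the mdisk feeding the Disjunction represents the negation of the variable. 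The scale of the gadgets is chosen uniformly and the radius of all mdisks is chosen small enough that the guarantee stated in the mdisk definition (no foreign centres ever enter an sdisk, regardless of how many mdisks get merged into it) holds globally; the grid from Lemma~\ref{lgrid} ensures there is enough room for all gadgets and all routing, and the whole construction has size polynomial in $v+c$.

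Correctness follows from the logical behaviour of each gadget that the excerpt already verified. In one direction, given a satisfying truth assignment $\tau$ for $I$, I define $\phi$ on $\mathcal{D}$ by declaring each Input gadget's shared mdisk ``merged in'' iff $\tau$ makes its variable true, then propagate this state through every Copy (preserving the signal on its outgoing mdisks), every Not (flipping it), and every Disjunction (which admits at least one valid proper assignment precisely when at least one of its three incoming mdisks is merged in, which is guaranteed because $\tau$ satisfies every clause); all sdisks become selected and all mdisks become merged, giving a proper $\phi$. Conversely, any proper assignment $\phi$ of $\mathcal{D}$ must, by the case analyses spelled out after Figure~\ref{fgadgets}, induce a consistent ``in/out'' labeling of every shared mdisk that is propagated correctly across Copy and Not gadgets and that satisfies every Disjunction; reading off the Input gadgets yields a truth assignment $\tau$ that must satisfy every clause of $I$.

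The main obstacle I anticipate is not the Boolean logic, which is handled gadget-by-gadget, but the \emph{geometric} side: one must choose a common scale and a specific placement so that (i) adjacent gadgets share exactly one mdisk without the sdisks of one gadget interfering with another's internal sorted order $\delta_i$, and (ii) the assumption that merging mdisks into an sdisk never lets the sdisk swallow a foreign centre is realised by concrete numerical bounds. I would handle this by first fixing the gadget dimensions to match the unit spacing of the grid from Lemma~\ref{lgrid}, then choosing the mdisk radius $\varepsilon$ to be smaller than a suitable function of the minimum clearance between any sdisk and any foreign centre in the whole layout; since the total number of mdisks that can ever be absorbed by a single sdisk is bounded by $|\mathcal{D}|$, a single uniform $\varepsilon$ suffices. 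Once these numerical constraints are fixed, the equivalence between satisfying assignments of $I$ and proper assignments of $\mathcal{D}$ follows directly from the gadget analyses, completing the reduction.
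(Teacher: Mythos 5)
Your overall route is the same as the paper's: membership in NP by direct verification of Definition~\ref{dproper}, then a reduction from \textsc{Planar Monotone 3-SAT} using the gridded monotone rectilinear representation of Lemma~\ref{lgrid}, with Input gadgets at variables, chains of Copy gadgets as wires, Disjunction gadgets at clauses, and a Not gadget inserted on each wire of a negative clause. Your extra care about choosing a uniform mdisk radius $\varepsilon$ small enough that no sdisk ever swallows a foreign centre is a welcome elaboration of what the paper only assumes.

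There is, however, one concrete step that fails as written: your truth-value convention is inverted. You declare an Input gadget's shared mdisk \emph{merged in} exactly when the variable is true, and you then need the Disjunction gadget of a positive clause to have at least one incoming mdisk \emph{merged in}. But trace the signal through a chain Input--Copy$_1$--$\cdots$--Copy$_k$--Disjunction: ``merged in'' is relative to a gadget, so a shared mdisk merged in with respect to one gadget is merged out with respect to the adjacent gadget that shares it, and the Copy gadget sends $m_1$ merged in to $m_2$ merged out (and vice versa). Hence if the Input's mdisk is merged in (absorbed by the Input's sdisk), then by induction $m^{(k)}$ is merged in with respect to Copy$_k$, i.e.\ merged \emph{out} with respect to the Disjunction --- regardless of the chain length. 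So under your convention a positive clause's Disjunction gadget is satisfiable precisely when some variable in it is \emph{false}, which is the wrong direction; symmetrically your treatment of negative clauses is also off by one negation. The paper avoids this by using the opposite convention (a variable is true iff its Input mdisk is merged \emph{out}), under which ``merged out at the Input'' propagates to ``merged in at the Disjunction.'' The fix is just to flip your convention; with that change the rest of your argument, including both directions of the equivalence, goes through as in the paper.
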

\begin{proof}
It is trivial to show that \textsc{Proper MCMD} is in NP.
To show that it is NP-hard, we reduce \textsc{Planar Monotone 3-SAT}
to \textsc{Proper MCMD}.
Let $I$ be an instance of \textsc{Planar Monotone 3-SAT},
with variables $V$ and clauses $C$.
Based on Lemma~\ref{lgrid}, there exists a monotone rectilinear
representation of $I$ on a
$(\left|C\right| + 1) \times (3 \cdot \left|C\right| + \left|V\right|)$
integer grid.  Let $R$ denote this representation.

\begin{figure}[!b]
\vspace*{4mm}
	\includegraphics[width=\columnwidth]{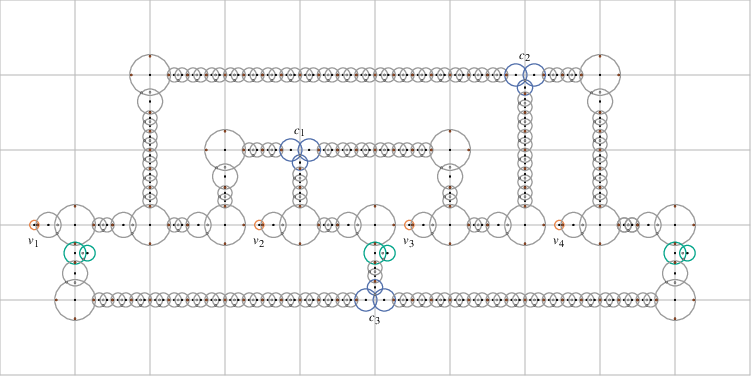}
	\centering
	\caption{A \textsc{Proper MCMD} instance obtained from the \textsc{Planar Monotone 3-SAT} instance of Figure~\ref{rotd}.}
	\label{rotf}\vspace{1mm}
\end{figure}

\medskip
We create an instance of \textsc{Proper MCMD} from $R$ as follows.
The transformation is demonstrated in Figure~\ref{rotf}, which
corresponds to the monotone rectilinear representation of Figure~\ref{rotd}.
\begin{enumerate}
\item We replace the segment corresponding to a variable in $R$
with an Input gadget and a chain of 2-Copy gadgets.  For each
intersection of this segment with a vertical segment, a 4-Copy
gadget is used.
\item Let $s$ be a horizontal segment corresponding to a clause in $R$.
Three variables appear in the clause, for each of which there is a
vertical segment that connects $s$ to a variable segment.
For the first and last intersections, 4-Copy gadgets are
used.  For the 2nd intersection, we use a Disjunction gadget.
These gadgets are connected using two chains of 2-Copy gadgets.
\item For each vertical segment that connects a variable segment
to a clause segment above the $x$-axis, we use a chain of 2-Copy
gadgets to connect the 4-Copy gadget of the variable segment to
the 4-Copy or Disjunction gadget (if it is the 2nd intersection)
of the clause segment.
For segments that appear below the $x$-axis, we do likewise,
except that we place a Not gadget before the chain of 2-Copy gadgets.
\end{enumerate}
Note that some of the gadgets of Figure~\ref{fgadgets} need to be
rotated or mirrored, for instance, in the vertical chains that connect
clauses and variables.  Also note that based on the sizes
shown in Figure~\ref{fgadgets}, shared m-disks always appear on
grid lines in our construction.  Since the distance between the
shared m-disks of any of the gadgets is at least 0.25,
at most four gadgets can appear on a grid segment of unit length.
Given that the total area of the grid, and therefore the total length of grid segments,
is bounded by $O(\left|C\right|^2)$,
the number of gadgets used in the resulting instance of \textsc{Proper MCMD}
is at most $O(\left|C\right|^2)$.
Thus, the size of the resulting \textsc{Proper MCMD} instance is
polynomial in terms of the size of the input \textsc{Planar Monotone 3-SAT}
instance.

Suppose there is a proper assignment for our \textsc{Proper MCMD} instance.
We obtain an assignment $A$ of the variables of our \textsc{Planar Monotone 3-SAT}
instance as follows.  We assign one to a variable if the m-disk of its
corresponding Input gadget is merged out, and assign zero otherwise.
Consider any clause $c$ in our \textsc{Planar Monotone 3-SAT} instance.
Let $g$ be the Disjunction gadget corresponding to $c$.

If $c$ is a positive clause, a chain of 2-Copy gadgets connects
the Input gadget of each of the variables that appear in $c$ to $g$.
Therefore, if variable $v$ appears in the
clause and if the shared m-disk of the Input gadget corresponding
to $v$ is merged out, the m-disk of the last 2-Copy gadget of its
chain is merged in inside $g$.  Since, one or more of the
shared m-disks of $g$ are merged in, at least one of the
literals in $g$ is satisfied.
Similarly, if $c$ is a negative clause, because there is a
Not gadget in the chain that connects each variable $v$ of $c$
to its Disjunction gadget, if the shared m-disk of the Input gadget corresponding
to $v$ is merged out, the m-disk of the last 2-Copy gadget of its
chain is also merged out inside $g$.
Since, one or more of the shared m-disks of $g$ are merged in,
at least one of the variables in $g$ is not satisfied.

Therefore, the \textsc{Planar Monotone 3-SAT} instance is satisfied
with assignment $A$.

\medskip
For the reverse direction, suppose there exists an assignment $A$ of
the variables, for which all clauses of $I$ are satisfied.
We can obtain a proper assignment in our \textsc{Proper MCMD} instance as follows.
For each variable $v$ in $V$, if $v$ is one, the shared m-disk of
the Input gadget corresponding to $v$ is merged out, and otherwise,
it is merged in.
Let $c$ be a positive clause in which variable $v$ with value one
appears (since $c$ is satisfied in $A$, variable $v$ must exist),
and let $g$ be the disjunction gadget corresponding to clause $c$.
Since $v$ is merged out, the m-disk of the last 2-Copy gadget
that connects the gadget corresponding to $v$ to $g$ is merged in with
respect to $g$.
This implies that one of the shared m-disks of the Disjunction gadget of
each positive clause is merged in.
We can similarly show that at least one of the shared m-disks of the
Disjunction gadgets corresponding to negative caluses are also merged in.
This yields a proper assignment for the \textsc{Proper MCMD} instance.
\end{proof}

In Corollary~\ref{csame} we show that even if all disks have the same
radius, the problem remains NP-hard.

\begin{corollary}
\label{csame}
\textsc{Proper MCMD} remains NP-complete, even if all disks are
required to be of the same radius.
\end{corollary}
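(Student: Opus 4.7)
The plan is to adapt the reduction of Theorem~\ref{tnp} so that every disk has unit radius while preserving the logical behaviour of all four gadget types. The construction of Theorem~\ref{tnp} uses two very different radii---large sdisks and infinitesimal mdisks---but the gadget arguments rely only on combinatorial facts about the relative ordering of inter-disk distances and about which centres lie inside which aggregate radius. Both facts can be simulated with only unit disks by exploiting the growth of the aggregate radius under absorption: an initially unit disk that absorbs $k$ other unit disks attains aggregate radius $k+1$ and can then reach much farther than its physical radius.

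Concretely, I would rescale the entire construction by a sufficiently large factor $\Lambda$ and then replace each sdisk $s$ of effective radius $R$ at position $p$ by a single unit \emph{root} disk at $p$, augmented by a chain of auxiliary unit disks placed along a ray directed away from the active part of the gadget. The auxiliary disks sit at carefully chosen distances $r_1 < r_2 < \cdots < r_k$ with $r_j$ just below $j$; combined with the shared mdisks that the sdisk must absorb (which, after scaling, live at still larger distances in distance-increasing order), they form a monotone sequence along which conditions~(1)--(2) of Definition~\ref{dproper} force the root to absorb the chain first and then reach the shared mdisks in the prescribed order. Taking $k$ linear in $\Lambda R$ gives the root the aggregate reach needed to replicate the original sdisk's behaviour. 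Each mdisk of the original construction is replaced by a single unit disk, and $\Lambda$ is chosen so large that the constant contribution of an absorbed mdisk to any aggregate radius is negligible compared to gadget-scale distances, preserving the analogue of the small-mdisk assumption used in Theorem~\ref{tnp}.

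With this replacement, each of the Input, Copy, Disjunction, and Not gadgets still admits exactly the same merged-in / merged-out case analysis as before, because that analysis depends only on the order of distances from each root to the disks competing to be merged with it, and this order is invariant under uniform scaling. The step I expect to be the main obstacle is ensuring \emph{non-interference}: the chains belonging to different sdisks, and the mdisks of unrelated gadgets, must not sit within distance of one another's roots in a way that perturbs the ordering of absorptions or places an unintended disk ahead of a chain disk. I would address this by routing each chain in a private direction inside its grid cell of Lemma~\ref{lgrid}---for example, upward for gadgets above the $x$-axis and downward for those below---and by choosing $\Lambda$ large enough that each grid cell comfortably contains its gadget together with the associated chain, leaving a safety margin that prevents cross-cell interactions. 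Once non-interference is verified, the equivalence between proper assignments and satisfying truth assignments transfers verbatim from Theorem~\ref{tnp}, completing the reduction and establishing the corollary.
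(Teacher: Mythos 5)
Your core idea---simulate each large sdisk by many equal-radius disks that merge together to recover its aggregate radius---is exactly the idea the paper uses, but your specific placement of the auxiliary chain breaks the gadgets. Condition~1 of Definition~\ref{dproper} is purely distance-based: $\delta_i(j)$ can be merged with $d_i$ only if \emph{every} disk closer to $d_i$ is also merged with $d_i$, regardless of direction. Your $j$-th chain disk sits at distance just below $j$, so to reach aggregate radius $\approx \Lambda R$ the chain must extend out to distance $\approx \Lambda R$ from the root, while the shared mdisks the root is supposed to reach lie at distances strictly \emph{less} than $\Lambda R$ (they must end up inside the aggregate disk). Hence the outer chain disks are farther from the root than some shared mdisks, and the root cannot absorb them unless those shared mdisks are merged in too. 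Routing the chain ``in a private direction'' does not help, since only distances matter. This destroys the merged-in/merged-out dichotomy: a shared mdisk that the gadget logic requires to be mergeable \emph{out} would block the root from ever attaining the aggregate radius it needs. A secondary unaddressed issue is that a nearby chain disk (a unit disk containing the root's centre) could itself be selected in place of the root, and with a spread-out chain the identity of the surviving root changes all the relevant distances.

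The paper sidesteps both problems with a much simpler placement: it fixes all radii to the mdisk radius $r=0.01$ and replaces each sdisk by concentric disks of radius $r$ at the sdisk's centre, with radii summing to the original radius. Being at distance $0$, these replacement disks form an initial segment of $\delta$ for whichever of them is selected, so they are absorbed before any other disk enters the ordering, the aggregate radius reaches the original sdisk's radius before any shared mdisk is encountered, and it is immaterial which of the concentric disks plays the role of the root. Your argument can be repaired by making the same choice (cluster the replacement disks at, or arbitrarily near, the root's centre so that they all strictly precede every other disk in the distance order); as written, the chain layout is a genuine gap.
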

\begin{proof}
We fix the radius of m-disks to $r = 0.01$.
We use the same construction as Theorem~\ref{tnp},
with the difference that we replace each s-disk with a number of
smaller disks of radius $r$ with the same centre, so that the sum of
the radii of these smaller disks equals the radius of the s-disk.
Since the disks added for each s-disk are not centre-disjoint,
and their centre cannot be contained in some other disk, exactly one
of them must be selected and after merging others, it reaches
the size of the original s-disk.  The rest of the proof of
Theorem~\ref{tnp} applies without significant changes.
\end{proof}

In the proof of Corollary~\ref{csame},
we can adjust the position of the disks that replace each s-disk
so that their centres do not coincide: they can be placed evenly
on a very short line segment (for instance of length 0.0001).
However, that they cannot be centre-disjoint, as they are to be merged.

Now we present the proof of Lemma~\ref{lgrid}.
\begin{proof}
Let $R$ be a monotone rectilinear representation of a \textsc{Planar
Monotone 3-SAT} instance (such a representation certainly exists~\cite{berg10}).
By extending horizontal segments of
$R$ we get at most $c + 1$ lines: one for the variables (the $x$-axis)
and at most $c$ for clauses.  Let $\ell_1, \ell_2, \dots, \ell_m$ be the lines
that appear above the $x$-axis ordered by their $y$-coordinates.
We move them (together with the segments appearing on them) so that,
$\ell_i$ is moved to $y = i$; vertical segments that connect them to a segment on
the $x$-axis may need to be shortend or lengthened during the movement.
Given that the $x$-coordinate of the end points of horizontal segments,
and also the vertical order of the segments, do not change, no new
intersection is introduced by this transformation.
The same is done for the lines that appear below the $x$-axis.

Repeating the same process for vertical segments,
we get at most $3c$ vertical lines.
We can similarly move these lines and the segments on them horizontally
so that they appear in order and consecutively on vertical integer grid lines.
Variables that do not appear in any clause, can be placed in at most
$v$ additional vertical grid lines.
This results in a $(c + 1) \times (3c + v)$ grid.
\end{proof}

\subsection{Relaxing merge order}
\label{srelax}
Due to the first condition of proper assignments
(Definition~\ref{dproper}), in a proper assignment $\phi$ of a set of
disks $D$, a disk $d_i$ can be merged with another disk $d_j$, only if
all closer disks to $d_i$ than $d_j$ are also merged with $d_i$.  This
condition, in addition to the second condition of
Definition~\ref{dproper} (disk $d_i$ may be merged with $d_j$ only
if $d_j$ contains the centre of $d_i$ after disks closer to $d_j$
than $d_i$ are merged with $d_j$), ensures the locality of the merges.  By
requiring this ordering for merges, however, we get instances for
which there is no solution, such as the one demonstrated in
Figure~\ref{rot2}.  For such instances, a solution can be obtained
by relaxing this condition.
In this section, we relax the first condition of Definition~\ref{dproper}.

\begin{definition}
In an assignment $\phi$ for a set of disks $D$, let $\delta_i^\phi$
denote the sequence of disks assigned to selected disk $d_i$,
ordered by their distance to $d_i$.  Also, let $\delta_i^\phi(j)$ denote
the $j$-th disk in this sequence.
\end{definition}

\begin{definition}
An assignment $\phi$ is \emph{uproper} (short for unordered proper)
if it meets the following conditions.
\begin{enumerate}
\item
For each pair of possible indices $i$ and $j$, in which
$\phi(d_j) = d_i$, choose $k$ such that $\delta_i^\phi(k) = d_j$.
The distance between $d_i$ and $d_j$ must be at most $r_i + \sum_{x=1}^{k-1} r_{\delta_i^\phi(x)}$.
In other words, after merging all closer disks in $\delta_i^\phi$,
$d_i$ must contain the centre of $d_j$.
\item
Selected disks must be centre-disjoint with respect to their aggregate radii;
i.e.~none of them can contain the centre of any other selected disk.
\end{enumerate}
\end{definition}

\begin{definition}
\label{drelaxed}
Given a set of disks, the goal in the Relaxed Maximum Centre-Disjoint Mergeable Disks
Problem (\textsc{RMCMD}) is to find a uproper assignment of the maximum possible cardinality.
\end{definition}

Theorem~\ref{texists} shows that any set of disks has a uproper
assignment, and therefore, \textsc{RMCMD} always has a solution.

\begin{theorem}
\label{texists}
There exists at least one uproper assignment for any set of disks $D$.
\end{theorem}
\begin{proof}
Let $S$ be the largest subset of $D$ for which there exists a uproper
assignment $\phi_S$.  If $S = D$, we are done; so we assume otherwise.
Let $d_k$ be any disk in $D \setminus S$.  The
centre of $d_k$ is contained in at least one disk $d_i$ of $S$
(considering its aggregate radius); otherwise, $d_k$ may be added to
$\phi$ as a selected disk, which contradicts the choice of $S$.

We obtain a uproper assignment $\phi_{S'}$ for $S' = S \cup \{d_k\}$ as follows.
Initially we set $\phi_{S'}(d_x) = \phi_S(d_x)$ for every $d_x \in S$.
We also set $\phi_{S'}(d_k) = d_i$, as $d_i$ contains the centre of $d_k$.
Merging $d_k$ increases the aggregate radius of $d_i$.
If $\phi_{S'}$ is not uproper, $d_i$ must contain the centre of another
disk $d_j$ such that $\phi_{S'}(d_j) = d_j$ (it must be a selected disk in $\phi_{S}$).
We modify $\phi_{S'}$ so that
$\phi_{S'}(d_x) = d_i$ for every $d_x \in \delta_j^\phi \cup \{d_j\}$ (here,
with some abuse of notation, assume $\delta_j^\phi$ to be a set).
These changes satisfy the second condition of Definition~\ref{drelaxed}:
after merging $d_j$ with $d_i$, $d_i$ contains $d_j$ completely
(because the centre of $d_j$ was inside $d_i$ before the merge and
after it, the aggregate radius of $d_i$ is increased by $r_j$),
and therefore it must contain the centre of $\delta_j^\phi(1)$.
We then merge $\delta_j^\phi(1)$ with $d_i$,
then we can merge $\delta_j^\phi(2)$ with $d_i$, and so on.

\medskip
If, after these changes, $\phi_{S'}$ is not uproper, then $d_i$ contains
the centre of another selected disk $d_y$.  For every such disk,
we similarly merge with $d_i$, $d_y$ and every disk merged with it,
until $d_i$ does not contain the centre of any other disk, and
then, $\phi_{S'}$ becomes proper.  This contradicts the choice of $S$
and implies that we have a uproper assignment for $D$.
\end{proof}

To show that \textsc{RMCMD} is NP-hard, in Theorem~\ref{tproper}
we reduce the \textsc{Partition} problem to \textsc{RMCMD}.
In \textsc{Partition}, we are given a set of positive integers and
have to decide if there is a subset, whose sum is half of the
sum of all numbers in the input list.  \textsc{Partition} is known
to be NP-complete \cite{karp72}.

\begin{theorem}
\label{tproper}
RMCMD is NP-hard.
\end{theorem}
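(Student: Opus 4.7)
The plan is to polynomially reduce \textsc{Partition} to \textsc{RMCMD}. Given a \textsc{Partition} instance $a_1,\ldots,a_n$ with $\sum_i a_i = 2T$, I will build an RMCMD instance containing two ``bin'' disks $L,R$ placed symmetrically on the $x$-axis together with $n$ ``item'' disks whose radii encode $a_1,\ldots,a_n$, plus a small number of auxiliary selected disks, and choose a target cardinality $k$ such that the instance admits a uproper assignment of cardinality at least $k$ if and only if some subset of $\{a_i\}$ sums to $T$.

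The key geometric ingredient is to fix $|LR|=r_L+r_R+2T$, so that the centre-disjointness requirement on aggregate radii (item~2 of the uproper conditions in Definition~\ref{drelaxed}) forces every item disk to be merged with one of $L,R$ whenever both $L,R$ are selected. To further force each bin's absorbed total to equal exactly $T$, I will add two small ``limit'' disks $M_L,M_R$ placed just outside the expansion of $L$ (resp.\ $R$) under $T$ units of absorption: $M_L$'s centre lies inside the expanded $L$ precisely when $L$ has absorbed more than $T$ units, preventing $M_L$ from being selected in that case. Provided $M_L,M_R$ are placed far enough from every other disk that they cannot be legally merged under the uproper distance condition, they must themselves be selected; hence centre-disjointness of $L$ with $M_L$ (resp.\ $R$ with $M_R$) forces $L$'s absorbed amount $\le T$ (resp.\ $R$'s $\le T$), and combined with the total being $2T$ both are exactly $T$ --- a \textsc{Partition}.

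For the forward direction, given a \textsc{Partition} solution $S$, I will exhibit the uproper assignment that merges items in $S$ with $L$ and the rest with $R$, leaving $L,R,M_L,M_R$ selected. The uproper distance condition will be verified by processing items in increasing distance from the bin, exploiting that the items can be clustered tightly enough that once any one item is absorbed, all subsequent cluster members are within reach. For the reverse direction, I will argue that any uproper assignment of cardinality $k$ must select exactly $L,R,M_L,M_R$ (no item can substitute, because every item's centre will lie inside $L$ or $R$ whenever the bins are selected with the required aggregate radii), and the constraints above then yield a valid partition of $\{a_i\}$.

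The main obstacle lies in the item placement. Because $|LR|>r_L+r_R$ by construction, no point of the plane is simultaneously within $r_L$ of $L$ and within $r_R$ of $R$, so the items cannot all sit at a single point reachable from both bins; some items must ``bridge'' the gap for at least one bin. I expect to resolve this by placing the item disks as a tight two-dimensional cluster near the midpoint of $LR$, choosing $r_L,r_R$ and cluster geometry so that each of $L,R$ can begin absorbing by first taking a designated nearest item, and thereafter can absorb any chosen subset of the remaining items in any order consistent with the relaxed uproper distance condition. Making this delicate geometric balance work without accidentally letting one of the bins absorb more than $T$ (or leaving the other unable to reach its share under any legal ordering) is the technical heart of the reduction.
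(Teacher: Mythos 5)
Your high-level strategy coincides with the paper's: reduce from \textsc{Partition} using two ``bin'' disks that must each absorb exactly half of the total weight, with item disks encoding the $a_i$ and small auxiliary disks whose selection caps each bin's aggregate radius. However, there is a genuine gap in your geometric realization, and you acknowledge it yourself: by fixing $\left|LR\right| = r_L + r_R + 2T$ you make the bins fully disjoint, so no item placed near the midpoint has its centre within reach of either bin's base radius, and the ``bridging'' mechanism you would need to let each bin start absorbing is exactly the part you leave unconstructed. Worse, this separation breaks the reverse direction: since the items' centres then lie outside both bins, nothing prevents an assignment that selects $L$, $R$, $M_L$, $M_R$ \emph{and} one item disk, merging all other items into that item (they are co-located, so the relaxed distance condition is trivially met); with $r_L > T$ this selected item cluster stays centre-disjoint from everything, yielding cardinality five regardless of whether a partition exists. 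Your claim that ``no item can substitute'' presupposes the bins have already absorbed their required shares, which an adversarial assignment need not arrange.

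The paper's construction sidesteps both problems by exploiting that the constraint on selected disks is only \emph{centre}-disjointness, not disjointness: it takes both bins of radius $2s$ at distance $3s$ (where $s = \sum_i a_i$), so each bin already contains the common item location (the midpoint, at distance $3s/2 < 2s$) with its base radius. Every item is therefore immediately mergeable with either bin in any order (which is where the relaxation of the merge-order condition is used), and no item can ever be selected alongside the bins, so the maximum cardinality is four and is attained exactly when each bin absorbs $s/2$. The cap is enforced by two disks of radius $s$ at distance $5s/2 + e$ (with $e < 1$) above the bins, playing the role of your $M_L, M_R$. If you want to salvage your version, the fix is to abandon $\left|LR\right| = r_L + r_R + 2T$ and instead let the bins overlap enough to both contain the item cluster from the outset, keeping only the centre-disjointness margin needed against the limit disks.
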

\begin{proof}
We reduce \textsc{Partition} to \textsc{RMCMD}.
Let $A = \{a_1, a_2, \dots , a_n\}$ be an instance of \textsc{Partition} and
$s$ be the sum of the members of $A$.
Also let $e$ be a real number such that $0 < e < 1$; we use $e$ to
create distances smaller than a segment of unit length.
We create an instance of \textsc{RMCMD} as follows (see Figure~\ref{rotg}).
\begin{enumerate}
\item Add disk $d_1$ of radius $2s$ and add $d_2$ with the same radius
at distance $3s$ on the right of $d_1$.
\item Add $d_3$ at distance $5s/2 + e$ above $d_1$ with radius $s$.
Similarly, add $d_4$ at distance $5s/2 + e$ above $d_2$ with the
same radius.
\item
Add one disk for each member of $A$ in the midpoint of the centres of $d_1$ and $d_2$,
such that the radius of the one corresponding to $a_i$ is $a_i$.
\end{enumerate}

\begin{figure}[h]
	\centering
	\includegraphics[width=8cm]{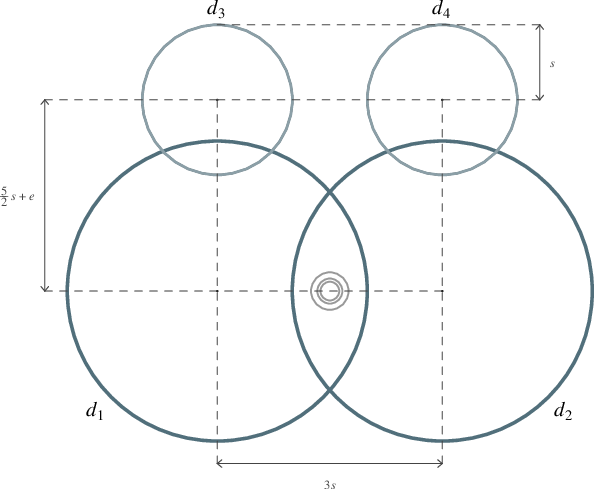}
	\caption{The construction in the proof of Theorem~\ref{tproper}.
	Only if we have a uproper assignment of cardinality four,
	input numbers can be divided into two partitions of equal sum.}
	\label{rotg}
\end{figure}
Let $\phi$ be the solution of this \textsc{RMCMD} instance.
We show that there is a valid solution to the \textsc{Partition} instance
if and only if the cardinality of $\phi$ is four.

Suppose $X$ is a subset of $A$ with sum $s/2$.
We obtain an assignment from $X$ as follows: every disk corresponding
to a member of $X$ is assigned to $d_1$ and others are assigned to
$d_2$.  Since the sum of the members of $X$ is $s/2$, the aggregate
radii of both disks are exactly $5s/2$.  Therefore, the centre of
$d_3$ and $d_4$ are outside these disks.  This yields a uproper
assignment of cardinality 4.

For the reverse direction, suppose the cardinality of $\phi$ is four (note
that it cannot be greater).  If so, all of $d_1$, $d_2$, $d_3$, and
$d_4$ are selected, and therefore, the aggregate radii of $d_1$ and
$d_2$ are lower than $5s/2 + e$.  Given that the sum of the radii
of the disks corresponding to members of $A$ is $s$ (which is an
integer) and $0 < e < 1$, the sum of the set of disks assigned
to $d_1$ and $d_2$
(and therefore the subsets of $A$ corresponding to them) are equal.
\end{proof}

We cannot extend Theorem~\ref{tproper} to the case in which
all disks have the same radius.  The problem is that the radii
of the disks may be arbitrarily large.  This is necessary
in Theorem~\ref{tproper}, because the partition problem is
weakly NP-complete, and using fixed-radius disks (the technique
we used in Corollary~\ref{csame} for MCMD) to construct larger disks may
imply a number of disks that is not polynomial in the input size.

\section{Algorithms for MCMD}
\label{salg}
In this section we present algorithms for solving MCMD.
We present an ILP formulation for general MCMD instances in Section~\ref{silp},
and in Section~\ref{s15} we present a dynamic programming algorithm
for MCMD instances in which disk centres are collinear.

\subsection{ILP formulation of MCMD}
\label{silp}
\begin{theorem}
\label{tilp}
Any instance of MCMD with $n$ disks can be formulated as an integer
linear programme with $O(n^2)$ binary variables.
\end{theorem}
\begin{proof}
For $1 \le i, j \le n$, we introduce a binary
variable $x_{i,j}$.  If $i \ne j$, $x_{i,j}$ indicates whether
disk $d_i$ is merged with disk $d_j$.
If $i = j$, it shows if disk $d_i$ is a selected disk.
The following constraint for $1 \le i \le n$ makes sure
that each disk is either selected or merged with another disk.
\begin{equation}
\sum_{j=1}^{n} x_{i,j} = 1
\end{equation}
An assignment $\phi$ can be obtained from the values of variables $x_{i,j}$
by letting $\phi(d_i) = d_j$ if and only if $x_{i,j} = 1$.

\medskip
Based on the conditions enumerated in Definition~\ref{dproper},
we add additional constraints to make sure that the obtained
assignment $\phi$ is proper.
Let $\delta_i$ be as defined in Definition~\ref{ddelta}.
\begin{enumerate}
\item
Based on the first condition of Definition~\ref{ddelta}, a disk can be
merged with $d_i$, only if its closer disks are merged as well.
In other words, $x_{\delta_i(j+1)i}$ can be one, only if $x_{\delta_i(j)i}$ is
also one for $1 \le j \le n - 2$, as expressed in the following constraint
for $1 \le i \le n$ and $1 \le j \le n - 2$.
\begin{equation}
x_{\delta_{i}(j) i} \ge x_{\delta{i}(j+1) i}
\end{equation}
\item
Based on the second condition of Definition~\ref{ddelta}, $d_j$ can be
merged with $d_i$, only if the distance of $d_i$ and $d_j$ is less
than $r_i(j - 1)$ (Definition~\ref{ddelta}) for $1 \le j \le n - 1$;
that is, after merging with $d_i$ all disks closer than $d_j$ to $d_i$,
the resulting disk must contain the centre of $d_j$.
We disallow merges that fail this condition:
for $1 \le i \le n$ and $1 \le j \le n - k$, where $r_i(j - 1) < |p_ip_j|$,
we add the following constraint (for simplicity, let $r_i(0) = r_i$).
\begin{equation}
\label{elim}
x_{ji} = 0
\end{equation}
\item
Based on the third condition of Definition~\ref{ddelta}, selected disks
must be centre-disjoint.  We add the following constraint for $1 \le i, j \le n$.
\begin{equation}
\label{edisjoint}
\sum_{k=1}^{n} r_k \cdot x_{ki}
\le |p_ip_j| + \infty \cdot (1 - x_{jj})
\end{equation}
Obviously, the left side computes the aggregate radius of $d_i$;
note that if $d_i$ is not selected, the left side equals zero and
the inequality is trivially satisfied.  There are two cases based on
whether $d_j$ is selected or not.  If $d_j$ is a selected disk ($x_{jj} = 1$)
the right side of the inequality simplifies to $|p_ip_j|$, making sure
that $d_i$ does not contain $d_j$.  If, on the other hand, $d_j$ is merged
with another disk (maybe even with $d_i$), the right side of the inequality
simplifies to $+\infty$ and the constraint is satisfied.
\end{enumerate}

Finally, as the goal of MCMD (Definition~\ref{dmcmd}) is to maximize the
number of selected disks, the objective of the programme is simply
to maximise $\sum_{i=1}^{n} x_{ii}$.
\end{proof}

The number of variables used in the integer programme of Theorem~\ref{tilp}
can be reduced based on the following observation.
The value of some variables of an MCMD instance is always zero by
constraints of type~\ref{elim}.  These variables can be removed.
The implementation of the ILP of Theorem~\ref{tilp} with
this optimization is publicly available\footnote{\url{https://github.com/nit-ce/mcmd.git}};
it has been used to obtain Figure~\ref{rotx}.

\subsection{Collinear disk centres}
\label{s15}
In this section we present a polynomial-time algorithm for solving
MCMD for a set of disk with collinear centres.
Note that even if disk centres are collinear, there may exist
no proper assignments, as demonstrated in Figure~\ref{rot2}.
In the rest of this section, let $\lambda = \left<d_1, d_2, \dots , d_n\right>$
be a sequence of input disks $D$, ordered by the $x$-coordinate of their centres.
We assume that the centres of the members of $D$ are collinear
and are on the $y$-axis.  We need Definitions~\ref{dext} and \ref{dsel}
to present the algorithm in Theorem~\ref{t15}.

\begin{definition}
\label{dext}
Let $\phi$ be an assignment of $\{d_1, d_2, \dots , d_n\}$
and let $\phi'$ be an assignment of $\{d_1$, $d_2, \dots , d_x\}$,
such that $x \le n$.
$\phi$ is \emph{an extension} of $\phi'$, if for every disk $d_i$
in $\{d_1, d_2, \dots , d_x\}$, we have $\phi(d_i) = \phi'(d_i)$.
In other words, every selected disk in $\phi'$ is also a selected
disk in $\phi$, and every merged disk in $\phi'$ is also merged
with the same disk in $\phi$.
Equivalently, when $\phi$ \emph{is limited to} $\{d_1, d_2, \dots , d_x\}$,
$\phi'$ is obtained.
\end{definition}

\begin{definition}
\label{dsel}
$M(x, y, z)$ denotes the maximum cardinality of a
proper assignment of $X = \{d_1, d_2$, $\dots,  d_x\}$, such that the
following conditions are met (we have $y \le x \le z \le n$).
\begin{enumerate}
\itemsep=0.8pt
\item $d_y$ is its right-most selected disk.
\item $d_{y+1}, d_{y+2}, \dots , d_{x}$ are all merged with $d_y$.
\item $d_z$ is the right-most disk in $D$, where $z \ge x$,
whose centre is contained in $d_y$ considering its aggregate
radius.
\end{enumerate}
\end{definition}
Note that by the third condition of Definition~\ref{dsel},
the centres of $d_{x+1},\allowbreak d_{x+2},\allowbreak \dots ,\allowbreak d_z$ are inside $d_y$
with respect to $D$,
but they are not merged with it, because they are outside $X$
and not present in the assignment which is limited to set $X$.
Also, note that actually the second condition of Definition~\ref{dsel}
is implied by its first condition: since $d_y$ is the right-most
selected disk, all of the disks that appear on the right
of $d_y$ in $X$ are surely merged.  On the other hand, none of
these disks can be merged with a selected disk $d_w$ on the
left of $d_y$, because, in that case $d_w$ would contain
the centre of $d_y$ and the assignment cannot be proper.

\begin{theorem}
\label{t15}
A proper assignment of the maximum cardinality for a set of
$n$ disks $D$, in which the centres of all disks are collinear,
can be computed in polynomial time.
\end{theorem}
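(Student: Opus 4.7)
The plan is a left-to-right dynamic programming algorithm that fills the table $M(x, y, z)$ of Definition~\ref{dsel} in increasing order of $x$; the answer to MCMD is $\max_{y,z} M(n, y, z)$, with infeasibility reported if no entry with $x = n$ is attained. Before setting up the recurrence, I would prove a small structural lemma for the collinear case: in any proper assignment, the disks merged with a selected disk $d_y$ form a contiguous index range $\{d_\ell, \ldots, d_{y-1}\} \cup \{d_{y+1}, \ldots, d_b\}$ around $d_y$. This is because, once the input is sorted by $x$-coordinate, the distance from $d_y$ to $d_{y-k}$ (resp.\ $d_{y+k}$) is monotonically increasing in $k$ on each side, so condition 1 of Definition~\ref{dproper} forces the merged set to grow outward contiguously. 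Consequently $r_y(\phi) = \sum_{i=\ell}^{b} r_i$.

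With this lemma the transitions from a state $M(x, y, z)$ when we incorporate $d_{x+1}$ are forced. If $x+1 \le z$, then $p_{x+1}$ lies inside $d_y$'s aggregate disk, so $d_{x+1}$ cannot be selected and by conditions 1--2 of Definition~\ref{dproper} the only legal move is to merge it with $d_y$; we set $M(x+1, y, z') \gets M(x, y, z)$, where $z'$ is recomputed from the aggregate radius after adding $r_{x+1}$. If $x+1 > z$, then $p_{x+1}$ is outside $d_y$, so by condition 2 it cannot merge with $d_y$, and because $d_y$ is the rightmost selected so far, $d_{x+1}$ must itself be selected; we set $M(x+1, x+1, z'') \gets M(x, y, z) + 1$, with $z''$ determined by the initial radius $r_{x+1}$, and we check condition 3, $p_{x+1} - p_y \ge r_y(\phi) + r_{x+1}$. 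The two cases are exhaustive, so correctness of the recurrence reduces to the structural lemma.

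The principal obstacle is that the triple $(x, y, z)$ does not pin down $r_y(\phi)$: different choices of the leftmost merged index $\ell$ on the left of $d_y$ can realise the same $(x, y, z)$ yet produce different $z'$ in the forced-merge transition and different feasibility of condition 3 at the next selection. The fix is to refine the state to $(x, y, \ell)$, so that $r_y(\phi) = \sum_{i=\ell}^{x} r_i$ is exact and the value of $z$ (and of $z'$ after any merge) is derivable by a simple table lookup on the prefix-sum array of the radii and the sorted centre positions. The refined table has $O(n^3)$ entries with constant-time transitions (after $O(n^2)$ preprocessing), so the whole algorithm runs in time polynomial in $n$, which is what Theorem~\ref{t15} requires.
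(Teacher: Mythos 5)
Your overall framework (a left-to-right dynamic program over prefixes, keyed by the rightmost selected disk, plus the observation that on a line the disks merged with a selected disk form a contiguous index block) is the same as the paper's. However, your transition is not exhaustive, and this is a genuine gap. In the case $x+1 > z$, i.e.\ $p_{x+1}$ lies outside the aggregate disk of the current rightmost selected disk $d_y$, you conclude that $d_{x+1}$ ``must itself be selected.'' That is false: $d_{x+1}$ may instead be merged with a selected disk $d_w$ with $w > x+1$ that your left-to-right scan has not reached yet. Concretely, take three collinear disks where $d_1$ is selected, $p_2$ is not contained in $d_1$, and $d_3$ contains $p_2$ so that the optimal assignment merges $d_2$ with $d_3$; your recurrence forces $d_2$ to be selected, which can violate centre-disjointness with $d_3$ or simply miss the optimum, and can make the algorithm wrongly report infeasibility. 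The state $(x,y,z)$ (or your refinement $(x,y,\ell)$) cannot record such ``pending'' disks awaiting a future host, so the one-disk-at-a-time transition cannot be patched by adding a third case without enlarging the state. The paper avoids this by advancing a whole group at a time: it enumerates each candidate selected disk $d_i$ together with the number $j$ of disks merged with it (the first $j$ disks of $\delta_i$, which form a contiguous block $\{d_a,\dots,d_b\}$ around $d_i$ with possibly $a<i$), and stitches this group onto a previous group that ends exactly at $d_{a-1}$ via the entry $M(a-1,t,g)$.

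A secondary issue: even in your case $x+1 \le z$, ``merge $d_{x+1}$ with $d_y$'' is not automatically legal. Conditions 1 and 2 of Definition~\ref{dproper} are stated with respect to the distance ordering $\delta_y$, so you must additionally verify that every disk closer to $d_y$ than $d_{x+1}$ (including $d_{\ell-1}$ and other disks to the left of the current block) is already merged with $d_y$, and that the partial aggregate radius of $d_y$ at $d_{x+1}$'s rank in $\delta_y$ (not the final aggregate radius encoded by $z$) already reaches $p_{x+1}$. The paper performs exactly this prefix-of-$\delta_i$ feasibility check in Step 3.a of Algorithm~\ref{amain}; your write-up omits it. Your point that $(x,y,z)$ does not pin down the aggregate radius is well taken, but the paper resolves it by recomputing $a,b,A,B$ (and $f,g$) from the enumerated pair $(i,j)$ (and $(t,k)$) rather than by refining the table index.
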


\begin{proof}
Let $M$ be defined as in Definition~\ref{dsel}.
Obviously, $\max_{i=1}^n M(n, i, n)$ is the cardinality of the solution
to this \textsc{MCMD} instance.

\medskip
The function $M$ accepts $O(n^3)$ different input values.
We can compute and store the values returned by $M$ in a three
dimensional table, which we reference also as $M$.
Algorithm~\ref{amain} uses dynamic programming to fill $M$,
and computes its entries based on the values of previously computed entries.
Before explaining the details of the algorithm, we give a high-level
overview as follows.  The order of the disks referenced here, in the algorithm,
and its succeeding discussion is demonstrated in Figure~\ref{f15}.

The main idea behind the algorithm is that in every proper
assignment of $\{d_1, d_2, \dots , d_x\}$, there is at least one
selected disk; take the right-most selected disk $d_i$.
By the first condition of Definition~\ref{dproper}, for some $j$
where $0 \le j \le n - 1$, the first $j$ entries of $\delta_i$
are merged with $d_i$.  Thus, the algorithm considers different
values of $j$ for each disk $d_i$, to compute the maximum proper
assignment of $\{d_1, d_2, \dots , d_x\}$ for any $x$
in which $d_i$ is the right-most selected disk and $j$ disks
are merged with $d_i$; it updates the entries of $M$ accordingly.
The main challenge is to
decide what to do with the disks that appear on the left of
$d_i$ and use the previously filled entries of $M$ for them.
To do so, the algorithm enumerates possible choices for the right-most
selected disk $d_t$ that appear on the left of $d_i$, and the
number of disks $k$ merged with it.

After presenting Algorithm~\ref{amain}, we shall explain the steps
of this algorithm in more detail.

Steps~\ref{i1} and \ref{i2} of the algorithm initialize $M$ and $\delta_i$.
In Step~\ref{i3}, we consider different cases in which $d_i$,
for $1 \le i \le n$ in order, is selected and update the
value of different entries of $M$.
For every possible value of $j$ from $0$ to $n - 1$,
suppose $j$ disks are merged with $d_i$.
These disks are the first $j$ disks of $\delta_i$
by the first condition of Definition~\ref{dproper}.

\clearpage

\begin{figure}[!b]
\vspace*{-5mm}
	\centering
	\includegraphics[width=11cm]{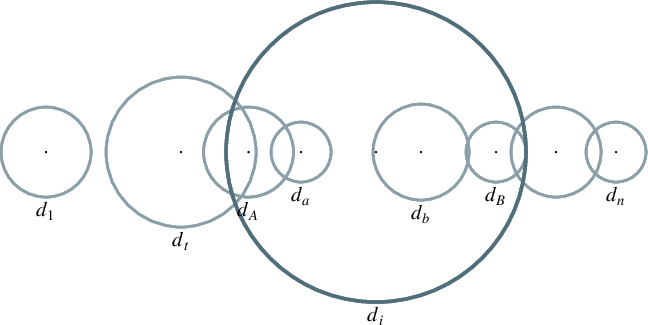}
	\caption{Demonstrating the symbols used in Theorem~\ref{t15}.
	$d_i$ is the right-most selected disk, $\{d_a , \dots d_b \} \setminus \{d_i\}$
	are merged with $d_i$, and $d_i$ contains the centre of $\{d_A, \dots , d_B\}$.
	$d_t$ is the right-most selected disk on the left of $d_i$.
	}
	\label{f15}
\end{figure}

\medskip
\begin{algorithm}[!ht]\small{
\caption{Find a solution to MCMD for a set of collinear disks}
\label{amain}
\begin{enumerate}
\item\label{i1} Compute the sequences $\delta_i$ for $1 \le i \le n$ (Definition~\ref{ddelta}).
\item\label{i2} Initialize every entry of $M$ to $0$.
\item\label{i3} For each $i$ in $1 \le i \le n$, for each $j$ in $0 \le j \le n - 1$ repeat:
\begin{enumerate}
\item\label{i3a}
If the first $j$ disks of $\delta_i$ can not be merged with $d_i$ according
to the second condition of Definition~\ref{dproper},
continue with the next iteration of this loop for the next value
of $i$ and $j$.
\item\label{i3b} Compute $A$, $a$, $b$, and $B$: $a$ and $b$ are the
left-most and right-most disks in $\lambda$ that are merged with $d_i$,
respectively.  Also, $A$ and $B$ are the left-most and right-most
disks of $D$ whose centres are contained in $d_i$, considering its
aggregate radius (note that we have $A \le a \le b \le B$).
\item\label{i3c} If $a = 1$ and $M(b, i, B) == 0$, assign 1 to $M(b, i, B)$.
\item\label{i3d} If $a > 1$,
for $t$ in $1 \le t \le A - 1$,
for $k$ in $0 \le k \le n - 2$ repeat:
\begin{enumerate}
\item\label{i3d1} If the first $k$ disks of $\delta_t$ cannot be merged
with $d_t$ (according to the second condition of Definition~\ref{dproper}),
continue to the next iteration of this loop.
\item\label{i3d2} Compute indices $f$ and $g$: $d_f$ is the right-most disk that
is merged with $d_t$, and $d_g$ is the right-most disk
of $D$ whose centre is contained in $d_t$, considering its
aggregate radius.
\item\label{i3d3} If $f \ge a$, $f \neq a-1$, or $g \ge i$ (disks merged with $d_i$
are being merged with $d_t$ or $d_t$ contains the centre of $d_i$,
both of which are not allowed in proper assignments),
continue to the next iteration of this loop.
\item\label{i3d4} Replace the value of $M(b, i, B)$ with the maximum
of its value and $M(a - 1, t, g) + 1$.
\end{enumerate}
\end{enumerate}
\item Compute and return $\max_{i=1}^n M(n, i, n)$.
\end{enumerate} }\normalsize
\end{algorithm}

\eject

\medskip
Let $S$ denote the set of such disks.
If this is not possible (the centre of one of these disks is not
contained in $d_i$, after merging its previous disks), we skip this value of $j$,
because it fails the second condition of Definition~\ref{dproper} (Step~\ref{i3a}).
Note that if there exists no proper assignment in which $j$ disks
are merged with $d_i$,
there cannot exists a proper assignment in which more than $j$
disks are merged with $d_i$ either, and we can safely skip
the remaining values of $j$ and continue the loop of Step~\ref{i3} by
incrementing the value of $i$.

\medskip
Let $a$, $b$, $A$, and $B$ be defined as Step~\ref{i3b}.
If $a = 1$, selecting $d_i$ and merging with it every disk in
$\{d_1, d_2, \dots , d_b\} \setminus \{d_i\}$ is a proper assignment
of the first $b$ disks of $\lambda$ with cardinality one.  Therefore,
we update the value of $M(b, i, B)$ to be at least one in Step~\ref{i3c}.

\medskip
If $a > 1$, let $\phi$ be any assignment of $\{ d_1, \dots , d_b \}$,
in which i) $d_i$ is selected, ii) the members of $S$ are merged with $d_i$,
and iii) the members of $\{d_{A}, \dots , d_{a - 1} \} \cup \{d_{b+1}, \dots , d_{B}\}$
are contained in $d_i$ after merging the members of $S$ with $d_i$.
By the definition of $M$, the value of $M(b, i, B)$ cannot be smaller
than the cardinality of $\phi$.
When $\phi$ is limited to $L = \{d_1, \dots , d_{a-1}\}$, it specifies a
proper assignment of $L$.  We denote this assignment with $\phi_L$.
We compute the value of $M(b, i, B)$ by considering all possible
assignments for $\phi_L$ and extending them to obtain $\phi$ by selecting $d_i$.

\medskip
Let $d_t$ be the right-most selected disk of $\phi_L$.
The following conditions hold.
\begin{enumerate}
\item We have $t < A$, because $\{d_{A}, \dots , d_{a - 1}\}$ are contained in $d_i$
in $\phi$, and $d_t$ cannot be a selected disk if $t \ge A$.
Therefore, disks $\{d_{t+1}, \dots , d_{a - 1}\}$ are merged with $d_t$ in $\phi_L$.
\item Suppose $k$ disks are merged with $d_t$ in $\phi_L$.
Let $d_f$ be the right-most disk of $D$ contained in $d_t$
after merging disks in $\phi_L$.  We have $f < i$; otherwise,
$d_f$ would contain the centre of $d_i$, and $d_i$ cannot be selected in $\phi$.
Also let $d_g$ be the right-most vertex of $D$ contained in $d_f$.
We have $g < i$; otherwise, $d_f$ would contain the centre of $d_i$ and
$\phi$ cannot be an extension of $\phi_L$.
\end{enumerate}
By trying possible values of $t$ and $k$ that meet these conditions
(Step~\ref{i3d}), we find the maximum cardinality of $\phi_L$, which
has been computed in the previous steps of this algorithm as $M(a - 1, t, g)$.
Since $\phi$ is an extension of $\phi_L$ by adding exactly one
selected disk $d_i$, the maximum cardinality of $\phi$ therefore
is at least $1 + M(a - 1, t, g)$.
Thus, we have
\[
M(b, i, B) \ge 1 +
\max_{{t: 1 \le t \le A - 1} \atop
{{k: 0 \le k \le n - 2} \atop
{\textit{if}~3.d.i~\textit{and}~3.d.iii~\textit{hold}}}} M(a - 1, t, g)
\]
Step~\ref{i3d4} updates $M(b, i, B)$ to be at least this value.
\end{proof}

\begin{theorem}
\label{t16}
The time complexity of computing $M$ for an instance of MCMD
with a set of $n$ disks, as described in Theorem~\ref{t15}, is $O(n^5)$.
\end{theorem}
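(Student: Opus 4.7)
My plan is to bound the running time of Algorithm~\ref{amain} piece by piece, showing that the nested loops in Step 3 dominate and contribute $O(n^5)$, while the preprocessing and the final maximisation are strictly cheaper. The table $M$ has $O(n^3)$ entries, so Step 2 costs $O(n^3)$; Step 1 builds each $\delta_i$ by sorting the other $n-1$ disks around $d_i$, which totals $O(n^2 \log n)$. Both are absorbed in the target bound, and the concluding Step 4 is only $O(n)$.

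The heart of the analysis is Step 3. The outer loops over $i \in \{1, \dots, n\}$ and $j \in \{0, \dots, n-1\}$ produce $O(n^2)$ iterations, and inside each the inner loops over $t \in \{1, \dots, A-1\}$ and $k \in \{0, \dots, n-2\}$ produce another $O(n^2)$ iterations, for a total of $O(n^4)$ executions of Step 3.d. My plan is to argue that each execution of the body, including the feasibility checks of Steps 3.a and 3.d.i and the computations of $a$, $b$, $A$, $B$ (Step 3.b) and $f$, $g$ (Step 3.d.ii), can be implemented in $O(n)$ time: one verifies the ordered merge condition by cumulating radii along the appropriate $\delta$-sequence, and the extreme indices $a, b, A, B, f, g$ are then read off by a single scan over $\lambda$ using the current aggregate radius. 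Multiplying $O(n^4)$ iterations by $O(n)$ work per body yields the claimed $O(n^5)$, and the update to $M(b, i, B)$ in Steps 3.c/3.d.iv is $O(1)$ per occurrence.

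The step I expect to be trickiest is keeping the per-body cost honestly at $O(n)$ rather than letting sloppy accounting inflate it. In particular, the feasibility check in Step 3.a reappears for every $j$ and that in Step 3.d.i for every $k$; a naive recomputation would be $O(n)$ each and, if charged to the innermost loop, could push the bound to $O(n^6)$. The clean way to handle this is to note that as $j$ increases within a fixed $i$, the aggregate radius and hence $a, b, A, B$ admit an $O(1)$ amortised update, and the analogous statement holds for $t, k$ in the inner loops. Even without this amortised improvement, charging a single $O(n)$ linear pass per execution of Step 3.d suffices for the bound. Summing $O(n^3) + O(n^2 \log n) + O(n^5) + O(n)$ then gives an overall running time of $O(n^5)$.
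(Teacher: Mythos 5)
Your analysis is correct and follows the same decomposition as the paper's proof: $O(n^2\log n)$ for building the $\delta_i$, $O(n^3)$ for initialising $M$, and $O(n^2)$ outer iterations times $O(n^2)$ inner iterations times $O(n)$ work per body for Step~3, giving $O(n^5)$ overall. Your extra care about not double-charging the feasibility checks is a sensible refinement but does not change the argument.
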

\begin{proof}
We analyse Algorithm~\ref{amain}.
Constructing $\delta_i$ (Step~\ref{i1}) can be done in $O(n^2 \log n)$
and initializing $M$ (Step~\ref{i2}) can be done in $O(n^3)$.
For each pair of values for $i$ and $j$, Steps~\ref{i3a}-\ref{i3c}
can be performed in $O(n)$.
In Step~\ref{i3d}, $O(n^2)$ possible cases for $t$ and $k$ are
considered, and for each of these cases, the Steps~\ref{i3d1},
\ref{i3d2}, \ref{i3d3}, and \ref{i3d4} can be performed in $O(n)$.
Since the loop of Step~\ref{i3} is repeated $O(n^2)$ times,
the time complexity of the whole algorithm is $O(n^5)$.
\end{proof}

Note that Algorithm~\ref{amain} cannot be used for
RMCMD (Defintion~\ref{drelaxed}) as it relies on the
first condition of Definition~\ref{dproper}: $d_j$ may be merged
with $d_i$, only if disks closer to $d_i$ than
$d_j$ are also merged with it.  This does not hold for RMCMD.

\section{Concluding remarks}
\label{scon}
We introduced a variant of geometric independent set for a
set of disks, such that the disks that do not appear in the
output must be merged with a nearby disk that does
(the problem was stated formally in Section~\ref{spre}).
We proved that this problem is NP-hard (Theorem~\ref{tnp}).
Also by relaxing one of the conditions of the problem, we introduced a
less restricted variant, which was proved NP-hard as well (Theorem~\ref{tproper}).
We presented an ILP for the general case, and
a polynomial-time algorithm for the case in which
disk centres are collinear.

The centre-disjointness property of the disks in the definition
of MCMD and RMCMD implies that we are implicitly assuming
Euclidean distance between the centre of the disks;
disk $d_i$ containing the centre of $d_j$ implies that
the Euclidean distance between $p_i$ and $p_j$ is at
most $r_i$.  If the problem is defined using other distance
measures, we would have different shapes; for instance squares
for $L_\infty$ measure.  We can even define the problem using
any (maybe irregular) shape, or in higher dimensions using,
for instance, spheres.
Interestingly, the hardness results of Section~\ref{tnp} can be
adapted to centre-disjoint squares, after adjusting the gadgets
(using squares instead of disks and updating the placement of
m-disks slightly).  Other shapes (or distance measures) may
be studied for hardness results or better algorithms.

\medskip
Several interesting problems call for further investigation, such as:
i) general approximation algorithms,
ii) studying the case in which the number of disks
that can be merged with a selected disk is bounded by some constant,
iii) solving RMCMD for disks with collinear centres, and
iv) solving MCMD when disks are pierced by a line (the line may
not pass through disk centres).


\begin{thebibliography}{10}
\providecommand{\url}[1]{\texttt{#1}}
\providecommand{\urlprefix}{URL }
\expandafter\ifx\csname urlstyle\endcsname\relax
  \providecommand{\doi}[1]{doi:\discretionary{}{}{}#1}\else
  \providecommand{\doi}{doi:\discretionary{}{}{}\begingroup
  \urlstyle{rm}\Url}\fi
\providecommand{\eprint}[2][]{\url{#2}}

\bibitem{formann91}
Formann M, Wagner F.
\newblock A Packing Problem with Applications to Lettering of Maps.
\newblock In: Symposium on Computational Geometry ({SoCG}). {ACM}, North
  Conway, {NH}, {USA}, 1991 pp. 281--288.
\newblock \doi{10.1145/109648.109680}.

\bibitem{hastad96}
Håstad J.
\newblock Clique is hard to approximate within $n^{1-\epsilon}$.
\newblock \emph{Acta Math.}, 1999.
\newblock \textbf{182}(1):105--142.
\newblock \doi{10.1007/BF02392825}.

\bibitem{fowler81}
Fowler RJ, Paterson M, Tanimoto SL.
\newblock Optimal Packing and Covering in the Plane are {NP}-Complete.
\newblock \emph{Information Processing Letters}, 1981.
\newblock \textbf{12}(3):133--137.
\newblock \doi{10.1016/0020-0190(81)90111-3}.

\bibitem{hochbaum85}
Hochbaum DS, Maass W.
\newblock Approximation Schemes for Covering and Packing Problems in Image
  Processing and {VLSI}.
\newblock \emph{J. ACM}, 1985.
\newblock \textbf{32}(1):130--136.
\newblock \doi{10.1145/2455.214106}.

\bibitem{agarwal98}
Agarwal PK, van Kreveld MJ, Suri S.
\newblock Label Placement by Maximum Independent Set in Rectangles.
\newblock \emph{Comput. Geom.}, 1998.
\newblock \textbf{11}(3--4):209--218.
\newblock \doi{10.1016/S0925-7721(98)00028-5}.

\bibitem{erlebach05}
Erlebach T, Jansen K, Seidel E.
\newblock Polynomial-Time Approximation Schemes for Geometric Intersection
  Graphs.
\newblock \emph{SIAM J. Comput.}, 2005.
\newblock \textbf{34}(6):1302--1323.
\newblock \doi{10.1137/S0097539702402676}.

\bibitem{chan03}
Chan TM.
\newblock Polynomial-Time Approximation Schemes for Packing and Piercing Fat
  Objects.
\newblock \emph{J. Algorithms}, 2003.
\newblock \textbf{46}(2):178--189.
\newblock \doi{10.1016/S0196-6774(02)00294-8}.

\bibitem{chan12}
Chan TM, Har-Peled S.
\newblock Approximation Algorithms for Maximum Independent Set of Pseudo-Disks.
\newblock \emph{Discrete \& Comput. Geom.}, 2012.
\newblock \textbf{48}(2):373--392.
\newblock \doi{10.1007/s00454-012-9417-5}.

\bibitem{been06}
Been K, Daiches E, Yap CK.
\newblock Dynamic Map Labeling.
\newblock \emph{IEEE Trans. Vis. Comput. Graph.}, 2006.
\newblock \textbf{12}(5):773--780.
\newblock \doi{10.1109/TVCG.2006.136}.

\bibitem{gemsa11}
Gemsa A, Nöllenburg M, Rutter I.
\newblock Consistent Labeling of Rotating Maps.
\newblock \emph{Comput. Geom.}, 2011.
\newblock \textbf{7}(1):308--331.
\newblock \doi{10.20382/jocg.v7i1a15}.

\bibitem{gemsa16}
Gemsa A, Nöllenburg M, Rutter I.
\newblock Evaluation of Labeling Strategies for Rotating Maps.
\newblock \emph{ACM J. Experim. Algor.}, 2016.
\newblock \textbf{21}(1):1.4:1--1.4:21.
\newblock \doi{10.1145/2851493}.

\bibitem{cano17}
Cano RG, de~Souza CC, de~Rezende PJ.
\newblock Fast Optimal Labelings for Rotating Maps.
\newblock In: Workshop on Algorithms and Computation ({WALCOM}). Springer,
  Hsinchu, Taiwan, 2017 pp. 161--173.
\newblock \doi{10.1007/978-3-319-53925-6_13}.

\bibitem{yokosuka13}
Yokosuka Y, Imai K.
\newblock Polynomial Time Algorithms for Label Size Maximization on Rotating
  Maps.
\newblock \emph{J. Inform. Process.}, 2017.
\newblock \textbf{25}:572--579.
\newblock \doi{10.2197/ipsjjip.25.572}.

\bibitem{funke16}
Funke S, Krumpe F, Storandt S.
\newblock Crushing Disks Efficiently.
\newblock In: IWOCA. Springer, Helsinki, Finland, 2016 pp. 43--54.
\newblock \doi{10.1007/978-3-319-44543-4_4}.

\bibitem{berg10}
de~Berg M, Khosravi A.
\newblock Optimal Binary Space Partitions in the Plane.
\newblock In: COCOON. Springer, Nha Trang, Vietnam, 2010 pp. 216--225.
\newblock \doi{10.1007/978-3-642-14031-0_25}.

\bibitem{rudi21}
Rudi AG.
\newblock Maximizing the Number of Visible Labels on a Rotating Map.
\newblock \emph{Sci. Ann. Comput. Sci.}, 2021.
\newblock \textbf{31}(2):275--287.
\newblock \doi{10.7561/SACS.2021.2.293}.

\bibitem{kaplan19}
Kaplan H, Klost K, Mulzer W, Roditty L, Seiferth P, Sharir M.
\newblock Triangles and Girth in Disk Graphs and Transmission Graphs.
\newblock In: European Symposium on Algorithms ({ESA}). Schloss Dagstuhl -
  Leibniz-Zentrum für Informatik, Munic/Garching, Germany, 2019 pp.
  64:1--64:14.
\newblock \doi{10.4230/LIPICS.ESA.2019.64}.

\bibitem{kaplan18}
Kaplan H, Mulzer W, Roditty L, Seiferth P.
\newblock Spanners for Directed Transmission Graphs.
\newblock \emph{SIAM J. Comput.}, 2018.
\newblock \textbf{47}(4):1585--1609.
\newblock \doi{10.1137/16M1059692}.

\bibitem{klost18}
Klost K, Mulzer W.
\newblock Recognizing Generalized Transmission Graphs of Line Segments and
  Circular Sectors.
\newblock In: LATIN. Springer, Buenos Aires, 2018 pp. 683--696.
\newblock \doi{10.1007/978-3-319-77404-6_50}.

\bibitem{biniaz18}
Biniaz A, Bose P, Carmi P, Maheshwari A, Munro JI, Smid MHM.
\newblock Faster Algorithms for some Optimization Problems on Collinear Points.
\newblock In: Symposium on Computational Geometry ({SoCG}). Schloss Dagstuhl -
  Leibniz-Zentrum für Informatik, Budapest, Hungary, 2018 pp. 8:1--8:14.
\newblock \doi{10.4230/LIPICS.SOCG.2018.8}.

\bibitem{lichtenstein82}
Lichtenstein D.
\newblock Planar Formulae and Their Uses.
\newblock \emph{SIAM J. Comput.}, 1982.
\newblock \textbf{11}(2):329--343.
\newblock \doi{10.1137/0211025}.

\bibitem{karp72}
Karp RM.
\newblock Reducibility Among Combinatorial Problems.
\newblock In: Complexity of Computer Computations. Plenum Press, New York, USA,
  1972 pp. 85--103.
\newblock \doi{10.1007/978-1-4684-2001-2_9}.

\end{thebibliography}
\end{document}